\def\calP{\mathcal{P}}
\newcommand{\VD}{\mbox{$V\!D$}}
\def\calA{\mathcal{A}}
\def\subsectionspace{\vspace*{-0.08in}}
\newenvironment{proof}{\noindent {\textbf{Proof:}}\rm}{\hfill $\Box$\rm}
\newtheorem{observation}{Observation}
\begin{document}

\title{Weak Visibility Queries of Line Segments in Simple Polygons}

\author{Danny Z.~Chen\inst{1}\thanks{The research of D.Z.~Chen
was supported in part by NSF under Grants CCF-0916606
and CCF-1217906.} \and Haitao Wang\inst{2}\thanks{Corresponding
author.
 }}

\institute{
  Department of Computer Science and Engineering\\
  University of Notre Dame, Notre Dame, IN 46556, USA\\
  \email{dchen@cse.nd.edu}\\
  \and
  Department of Computer Science\\
  Utah State University, Logan, UT 84322, USA\\
  \email{haitao.wang@usu.edu}
}
\maketitle

\pagestyle{plain}
\pagenumbering{arabic}
\setcounter{page}{1}

\begin{abstract}
Given a simple polygon $P$ in the plane,
we present new algorithms and data structures for
computing the weak visibility polygon from any query
line segment in $P$. We build a data structure in $O(n)$ time and
$O(n)$ space that can compute the visibility polygon
for any query line segment $s$ in
$O(k\log n)$ time, where $k$ is the
size of the visibility polygon of $s$ and $n$ is the number of
vertices of $P$. Alternatively, we build a data
structure in $O(n^3)$ time and $O(n^3)$ space that can
compute the visibility polygon for any query line segment in
$O(k+\log n)$ time.
\end{abstract}


\section{Introduction}
\label{sec:mainintro}

Given a simple polygon $\calP$ of $n$ vertices in the plane,
two points in $\calP$ are {\em visible} to each other if the line segment joining
them lies in $\calP$. For a line segment $s$ in $\calP$, a point $p$ is
{\em weakly visible} (or {\em visible} for short) to $s$ if $s$ has at least one point that is visible to $p$.
The {\em weak visibility polygon} (or {\em visibility polygon} for short) of $s$, denoted by $Vis(s)$, is
the set of all points in $\calP$ that are visible to
$s$. The {\em weak visibility query problem} is to
build a data structure for $\calP$ such that
$Vis(s)$ can be computed efficiently for any
query line segment $s$ in $\calP$.

This problem has been studied before.
Bose {\em et al.} \cite{ref:BoseEf02} built a data structure of
$O(n^3)$ size in $O(n^3\log n)$ time that can compute $Vis(s)$ in
$O(k\log n)$ time for any query, where $k$ is the size of $Vis(s)$.
Throughout this paper, we always let $k$ denote the size of $Vis(s)$ for any
query line segment $s$. Bygi and Ghodsi \cite{ref:BygiWe11} gave an
improved data structure with the same size and preprocessing time as
that in \cite{ref:BoseEf02} but its query time is $O(k+\log n)$.
Aronov {\em et al.} \cite{ref:AronovVi02} proposed a smaller data
structure of $O(n^2)$ size with $O(n^2\log n)$ preprocessing time and
$O(k\log^2n)$ query time. Table \ref{tab:10} gives a summary.
If the problem is to compute $Vis(s)$ for a single segment (not queries),
then there is an $O(n)$ time algorithm \cite{ref:GuibasLi87}.

\begin{table}[t]
{
\begin{center}
\caption{\footnotesize A summary of the data structures.
The value $k$ is the size of $Vis(s)$ for any query segment $s$.
}
\label{tab:10}
\begin{tabularx}{0.70\textwidth}{llll}
\toprule
Data Structure \  & Preprocessing Time \   & Size & Query Time\\
\midrule
\cite{ref:BoseEf02}   & $O(n^3\log n)$  & $O(n^3)$ &   $O(k\log n)$ \\
\cite{ref:BygiWe11} & $O(n^3\log n)$ & $O(n^3)$ & $O(k+\log n)$ \\
\cite{ref:AronovVi02} & $O(n^2\log n)$ & $O(n^2)$ \  & $O(k\log^2n)$ \\
Our Result 1 & $O(n)$ & $O(n)$ & $O(k\log n)$\\
Our Result 2 & $O(n^3)$ & $O(n^3)$ & $O(k+\log n)$\\
\bottomrule
\end{tabularx}
\end{center}
}
\vspace*{-0.2in}
\end{table}

\subsection{Our Contributions}

In this paper, we present two new data structures whose performances are
also given in Table \ref{tab:10}.
Our first data structure, which is built in $O(n)$ time and
$O(n)$ space, can compute $Vis(s)$ in $O(k\log n)$ time for any query
segment $s$. Comparing with the data structure in \cite{ref:AronovVi02},
our data structure reduces the query time by a logarithmic factor and
uses much less preprocessing time and space.

The preprocessing time and size of our second data structure are
both $O(n^3)$,
and each query takes $O(k+\log n)$ time.
Comparing with the
result in \cite{ref:BygiWe11}, our data structure has less
preprocessing time. In addition, our solution, which is based on the approach in
\cite{ref:BoseEf02}, is much simpler than that in \cite{ref:BygiWe11}.
Further, our techniques explore many geometric observations on the problem that
may be useful elsewhere. For example, we prove a tight
combinatorial bound for the
``zone" in a line segment arrangement contained in a simple polygon,
as follows, which is
interesting in its own right.

Let $S$ be a set of line segments in a simple polygon
$\calP$ such that both endpoints of
each segment of $S$ are on $\partial\calP$ (i.e., the boundary of
$\calP$).
Let $\calA$ be the arrangement formed by the segments in $S$ and
the edges of $\partial\calP$.
For any line segment $s$ in $\calP$ (the endpoints of $s$ need
not be on $\partial\calP$), the {\em zone} of $s$, denoted by $Z(s)$,
is defined to be
the set of all faces of $\calA$ that $s$ intersects.
For each edge of any face in $\calA$, it
either lies on a segment of $S$ or lies on $\partial\calP$. Let
$\Lambda$ be the number of edges of the faces in $Z(s)$ each of which
lies on a segment of $S$.
We want to find a good upper bound for $\Lambda$.
By using the zone theorem for the general line segment
arrangement \cite{ref:EdelsbrunnerAr92}, we can easily
obtain $\Lambda=O(|S|\alpha(|S|))$,
where $\alpha(\cdot)$ is the functional inverse
of Ackermann's function \cite{ref:HartNo86}.
In this paper, we prove a tight
bound $\Lambda=O(m)$, where $m\leq |S|$ is the number of segments in
$S$ each of which contains at least one edge of the faces in $Z(s)$.
An immediate
application of this result is that we obtain an efficient query
algorithm for our second data
structure. Since
combinatorial bounds on arrangements are fundamental, this result
may find other applications as well.

The rest of this paper is organized as follows.
In Section \ref{sec:pre}, we review some geometric structures and
a query algorithmic scheme that will be used by the query
algorithms of both our data structures. We will also give a
``ray-rotating'' data structure in Section \ref{sec:pre}, which is
needed by our first data structure in Section \ref{sec:first}.
In Sections \ref{sec:first} and \ref{sec:second}, we present our first
and second data structures, respectively.
As a by-product of our second data structure,
the combinatorial bound of the zone mentioned above is also given in
Section \ref{sec:second}.
Section \ref{sec:conclusions} concludes the paper.

\section{Preliminaries}
\label{sec:pre}

In this section, we review some geometric structures and
discuss an algorithmic scheme that will be used by the query
algorithms of both our data structures given in Sections
\ref{sec:first} and \ref{sec:second}.
We will also give a
``ray-rotating'' data structure in Section \ref{sec:rayrotate}, which is
needed by our first data structure in Section \ref{sec:first}.

For simplicity of discussion, we assume no three vertices of
$\calP$ are collinear; we also assume for any query segment $s$,
$s$ is not collinear with
any vertex of $\calP$ and each endpoint of $s$ is not collinear with
any two vertices of $\calP$.
As in \cite{ref:AronovVi02,ref:BoseEf02},
our approaches can be easily extended to the general situation.

Denote by $\partial\calP$ the boundary of $\calP$.
The visibility graph of $\calP$ is a graph whose vertex set consists
of all vertices of $\calP$ and whose edge set consists of edges
defined by all visible pairs of vertices of $\calP$. Here, two
adjacent vertices on $\partial\calP$ are considered visible
to each other. In this paper, we always use $K$ to denote the size of
the visibility graph of $\calP$. Note that $K=O(n^2)$ and $K=\Omega(n)$. The visibility
graph can be computed in $O(K)$ time \cite{ref:HershbergerAn89}.

We introduce the {\em visibility decomposition} of $\calP$
\cite{ref:AronovVi02,ref:BoseEf02}. Consider a point $p$ in $\calP$
and a vertex $v$ of $\calP$. Suppose the line segment $\overline{pv}$
is in $\calP$, i.e., $p$ is visible to $v$. We extend $\overline{pv}$
along the direction from $p$ to $v$ and suppose we stay inside
$\calP$ (when this happens, $v$ must be a reflex vertex). Let $w$ be
the point on the boundary of $\calP$ that is hit first by our above
extension of $\overline{pv}$ (e.g., see Fig.~\ref{fig:window}). We
call the line segment $\overline{vw}$ the {\em window} of $p$. The
point $p$ is called the {\em defining point} of the window and the
vertex $v$ is called the {\em anchor vertex} of the window. It is
well known that the boundary of the visibility polygon of the point
$p$ consists of parts of $\partial\calP$ and the windows of $p$
\cite{ref:AronovVi02,ref:BoseEf02}. If the point $p$ is a vertex of $\calP$, then the window $\overline{vw}$ is called a {\em critical constraint} of $\calP$ and $p$ is called the {\em defining vertex} of the critical constraint.
For example, in Fig.~\ref{fig:subproblem}, the two critical
constraints $\overline{up_u}$ and $\overline{vp_v}$ are both defined
by $u$ and $v$; for $\overline{up_u}$, its anchor vertex is $u$ and
its
defining vertex is $v$, and for $\overline{vp_v}$, its anchor vertex is $v$ and
defining vertex is $u$. It is easy
to see that the total number of critical constraints is $O(K)$
because each critical constraint corresponds to a visible vertex pair of
$\calP$ and a visible vertex pair corresponds to at most
two critical constraints.

\begin{figure}[t]
\begin{minipage}[t]{0.49\linewidth}
\begin{center}
\includegraphics[totalheight=0.8in]{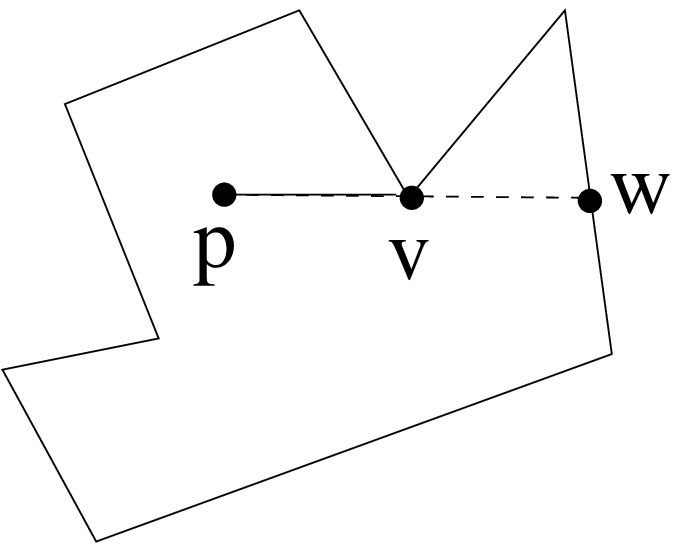}
\caption{\footnotesize Illustrating a window $\overline{vw}$ of $p$.}
\label{fig:window}
\end{center}
\end{minipage}
\hspace{0.02in}
\begin{minipage}[t]{0.49\linewidth}
\begin{center}
\includegraphics[totalheight=0.8in]{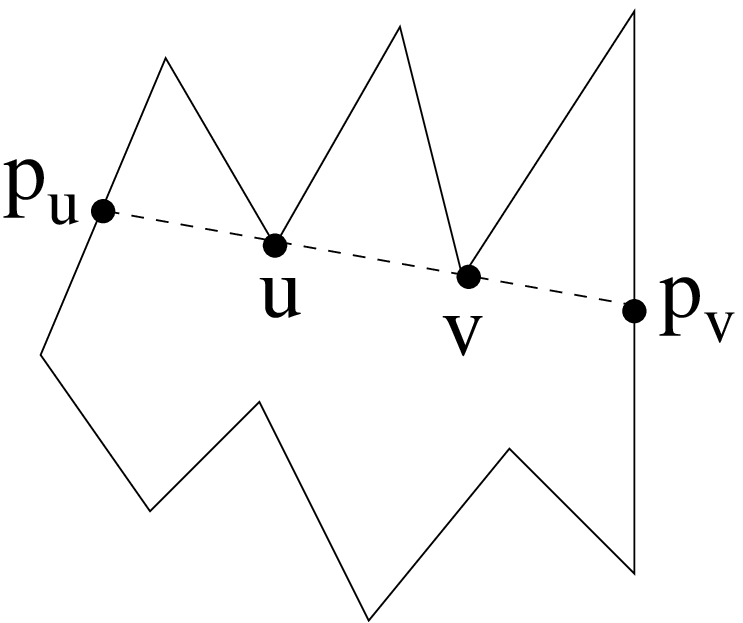}
\caption{\footnotesize Illustrating the two critical constraints
$\overline{vp_v}$ and $\overline{up_u}$ defined by
the two mutually visible vertices $u$ and $v$.}
\label{fig:subproblem}
\end{center}
\end{minipage}
\vspace*{-0.15in}
\end{figure}

As in \cite{ref:AronovVi02,ref:BoseEf02}, we represent the
visibility polygon $Vis(s)$ of a segment $s$ by a cyclic list of the
vertices and edges of $\calP$ in the order in which they appear on the
boundary of $Vis(s)$, and we call such a list the {\em combinatorial
representation} of $Vis(s)$ \cite{ref:AronovVi02}. With the
combinatorial representation, $Vis(s)$ can be explicitly determined in
linear time in terms of the size of $Vis(s)$.
Our query algorithms given later always report the combinatorial
representation of $Vis(s)$.

The critical constraints of $\calP$ partition $\calP$
into cells, called the {\em visibility decomposition} of $\calP$
and denoted by $\VD(\calP)$. The visibility decomposition $\VD(\calP)$ has a
property that for any two points $p$ and $q$ in the same cell of $\VD(\calP)$, the two
visibility polygons $Vis(p)$ and $Vis(q)$ have the same combinatorial
representation. Also, the combinatorial representations of the
visibility polygons of two adjacent cells in $\VD(\calP)$ have only
$O(1)$ differences.  The visibility decomposition has been used for
computing visibility polygons of query points (not line segments)
\cite{ref:AronovVi02,ref:BoseEf02}.

Consider a query segment $s$ in $\calP$. In the following, we discuss
an algorithmic scheme for computing $Vis(s)$.
Denote by $a$ and $b$ the two endpoints of $s$. Suppose we move a
point $p$ on $s$ from $a$ to $b$. We want to capture the combinatorial
representation changes of $Vis(p)$ of the point
$p$ during its movement.
Initially, $p$ is at $a$ and
we have $Vis(p) = Vis(a)$. As $p$ moves, the combinatorial representation of
$Vis(p)$ changes if and only if $p$ crosses a critical constraint of
$\calP$ \cite{ref:AronovVi02,ref:BoseEf02}.
$Vis(s)$ is the union of all such visibility polygons
as $p$ moves from $a$ to $b$. Therefore, to compute $Vis(s)$, as in
\cite{ref:AronovVi02,ref:BoseEf02}, we use the following approach.
Initially, let $Vis(s)=Vis(p)=Vis(a)$. As $p$ moves from $a$ to
$b$, when $p$ crosses a critical constraint, either $p$ sees one more
vertex/edge, or $p$ sees one less vertex/edge. If $p$ sees one more
vertex/edge, then we update $Vis(s)$ in constant time by inserting the new
vertex/edge
to the appropriate position of the combinatorial representation of
$Vis(s)$.
Otherwise, we do nothing (because even though a vertex/edge is
not visible to $p$ any more, it is visible to $s$ and thus should be
kept; refer to \cite{ref:BoseEf02} for details).

The above algorithm has two
remaining issues. The first one is how to compute $Vis(a)$ of the
point $a$. The second issue is how to determine the next critical
constraint that will be crossed by $p$ as $p$ moves.
Each of our two data structures given in Sections \ref{sec:first} and
\ref{sec:second} does some preprocessing such that the corresponding
query algorithm can resolve the above two issues efficiently.

\subsection{The Ray-Rotating Queries}
\label{sec:rayrotate}

Our first data structure in Section \ref{sec:first} needs the
following {\em ray-rotating} queries. Given any ray $\rho$ whose
origin $z$ is in $\calP$, the ray-rotating query asks for
the first vertex of $\calP$ visible to $z$ that will be hit by
$\rho$ when we rotate $\rho$ clockwise (or counterclockwise) around
$z$ (e.g., see Fig.~\ref{fig:rayrotate}).
By making use of the ray-shooting data structures
\cite{ref:ChazelleRa94,ref:ChazelleVi89,ref:GuibasLi87,ref:HershbergerA95}
and the two-point shortest path query data structure
\cite{ref:GuibasOp89}, we obtain the following result.

\begin{figure}[t]
\begin{minipage}[t]{0.49\linewidth}
\begin{center}
\includegraphics[totalheight=1.0in]{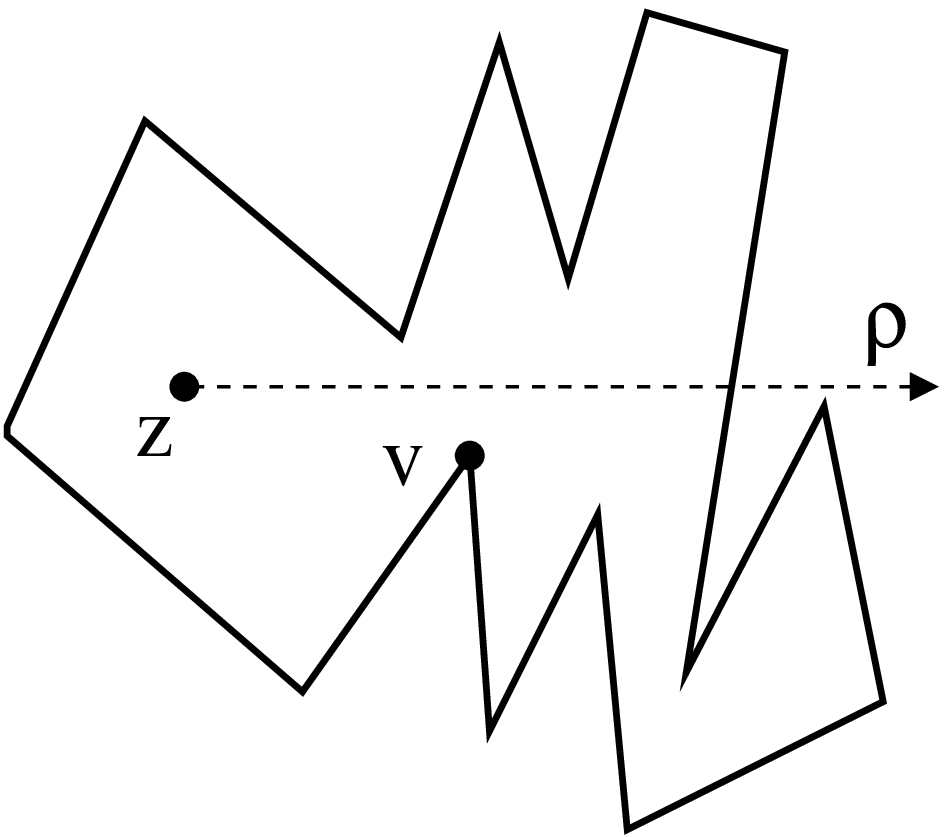}
\caption{\footnotesize Illustrating the ray-rotating query for $\rho$:
The vertex $v$ is the first visible vertex to $z$ that will be hit by
$\rho$ if we rotate $\rho$ clockwise around $z$.}
\label{fig:rayrotate}
\end{center}
\end{minipage}
\hspace*{0.06in}
\begin{minipage}[t]{0.49\linewidth}
\begin{center}
\includegraphics[totalheight=1.0in]{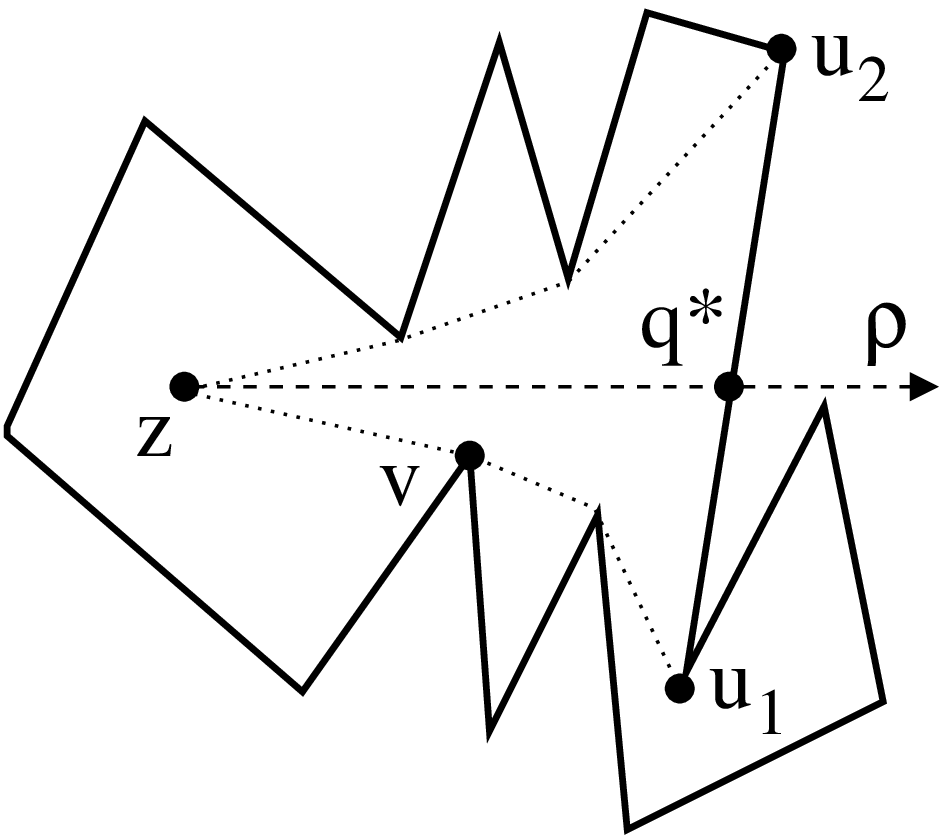}
\caption{\footnotesize Illustrating the proof for Lemma
\ref{lem:rayrotate}: The two dotted paths are shortest paths from $z$
to $u_1$ and $u_2$, respectively.}
\label{fig:funnel}
\end{center}
\end{minipage}
\vspace*{-0.15in}
\end{figure}

\begin{lemma}\label{lem:rayrotate}
A data structure can be built in $O(n)$ time and $O(n)$ space such
that each ray-rotating query can be answered in $O(\log n)$ time.
\end{lemma}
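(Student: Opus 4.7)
The plan is to reduce each ray-rotating query to one ray-shooting query and one two-point shortest-path query. In preprocessing I would equip $\calP$ with (a) a ray-shooting data structure of $O(n)$ size buildable in $O(n)$ time with $O(\log n)$ query time (using the cited ray-shooting structures), and (b) the two-point shortest-path structure of \cite{ref:GuibasOp89}, also of $O(n)$ size and $O(n)$ preprocessing, which reports the first edge of a shortest path between any two query points in $O(\log n)$ time; together these meet the stated $O(n)$ bounds.

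For a query ray $\rho$ with origin $z$, say in the clockwise rotation case (the counterclockwise case is symmetric), I would first shoot $\rho$ to obtain the point $p^*$ where $\rho$ exits $\calP$ and the edge $e=\overline{u_1u_2}$ of $\partial\calP$ containing $p^*$; label the endpoints so that rotating $\rho$ clockwise about $z$ slides $p^*$ along $e$ toward $u_1$. Next, I would query (b) for the first edge $\overline{zv}$ of the shortest path $\pi(z,u_1)$ inside $\calP$, and return $v$ as the answer.

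The correctness claim to establish is that $v$ is precisely the first vertex of $\calP$ visible from $z$ encountered as $\rho$ rotates clockwise. I would prove this by inspecting the boundary of the visibility polygon $Vis(z)$ near $p^*$: walking along $\partial Vis(z)$ clockwise from $p^*$, the next $\calP$-vertex met is either (i) the endpoint $u_1$, when the sub-edge of $e$ between $p^*$ and $u_1$ is fully visible from $z$, or (ii) a reflex vertex $v_1$ anchoring a window of $Vis(z)$ that occludes $u_1$. In case (i), $\overline{zu_1}$ lies in $\calP$ so $\pi(z,u_1)=\overline{zu_1}$ and $v=u_1$; in case (ii), the window cuts $\calP$ into the side containing $Vis(z)$ and a pocket containing $u_1$, whose only gateway visible from $z$ is $v_1$, so $\pi(z,u_1)$ must turn first at $v_1$ and hence $v=v_1$.

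The main obstacle will be making case (ii) rigorous — identifying the occluding reflex vertex $v_1$ and confirming that $\pi(z,u_1)$ indeed bends at $v_1$ before any other vertex. I would rely on the standard funnel structure of shortest paths in a simple polygon, illustrated in Figure~\ref{fig:funnel} (the two dotted paths to $u_1$ and $u_2$ give the symmetric clockwise/counterclockwise answers), together with the fact that within the pocket cut off by a visibility window, the window's anchor is the unique vertex via which any path from $z$ into the pocket can first depart from $\overline{zv_1}$, so it must appear as the first turn of $\pi(z,u_1)$.
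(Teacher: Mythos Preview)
Your proposal is correct and follows essentially the same approach as the paper: preprocess with an $O(n)$-size ray-shooting structure and the Guibas--Hershberger two-point shortest-path structure, then answer a query by shooting $\rho$ to find the hit edge $\overline{u_1u_2}$, picking the endpoint $u_1$ on the clockwise side, and returning the second vertex of the first edge of $\pi(z,u_1)$. The paper's own proof is the same algorithm, justified via the funnel on $\overline{u_1u_2}$ with apex $z$ (your correctness discussion via $\partial Vis(z)$ and windows is a bit more explicit than the paper's, which simply calls the identification of $v^*$ with $v$ an ``easy observation'').
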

\begin{proof}
Consider any ray $\rho$ whose origin $z$ is in $\calP$.
Without
loss of generality, assume $\rho$ is horizontally rightwards.
Let $v^*$ be the sought vertex for the ray-rotating query on $\rho$,
i.e., $v^*$ is the first vertex of $\calP$ visible to $z$ that
will be hit by $\rho$ when we rotate $\rho$ clockwise around $z$ (the case of
counterclockwise rotation can be done similarly).

In the preprocessing, we compute a ray-shooting data structure
\cite{ref:ChazelleRa94,ref:ChazelleVi89,ref:GuibasLi87,ref:HershbergerA95}
in $O(n)$ time and space, such that given any ray with the origin in $\calP$, the first
point on the boundary of $\calP$ hit by the ray can be found in $O(\log n)$ time.
We also compute a two-point shortest path query data structure
\cite{ref:GuibasOp89} in $O(n)$ time and space, such that given any
two points $p$ and $q$ in $\calP$, the shortest path length between
$p$ and $q$ can be computed in $O(\log n)$ time and the path itself
can be found in additional time proportional to the number of turns
along it.

Our query algorithm for finding $v^*$ works as follows.

First, we use the ray-shooting data structure to find in $O(\log n)$ time
the first point
$q^*$ on the boundary of $\calP$ hit by $\rho$; the edge of
$\calP$ containing $q^*$ is also known immediately from the ray-shooting query.
If $q^*$ is a vertex of $\calP$, then $v^*=q^*$ and we are done;
otherwise, let the end
vertices of the edge of $\calP$ containing $q^*$ be $u_1$ and $u_2$
(e.g., see Fig.~\ref{fig:funnel}).
Let $\pi_1$ be the shortest path in $\calP$ from $z$ to $u_1$, and
similarly, let $\pi_2$ be the shortest path from $z$ to $u_2$.
Since $z$ is visible to $q^*$ on $\overline{u_1u_2}$, the
region bounded by $\pi_1$, $\pi_2$, and $\overline{u_1u_2}$ is a
funnel \cite{ref:GuibasLi87,ref:GuibasOp89,ref:LeeEu84}, with $z$ as
the apex (e.g., see Fig.~\ref{fig:funnel}).
Recall that $\rho$ is horizontally rightwards;
one vertex of $u_1$ and $u_2$ must be below the line
containing $\rho$. Without loss of generality, let $u_1$ be below the
line containing $\rho$.
Let $v$ be the vertex on $\pi_1$ that is connected to $z$ by a line
segment on $\pi_1$, i.e., $\overline{zv}$ is the first edge of $\pi_1$
(e.g., see Fig.~\ref{fig:funnel}).
Note that $v=u_1$ is possible, in which case $\pi_1$ is the line segment
$\overline{zu_1}$.
An easy observation is that the sought vertex $v^*$ is exactly the vertex $v$.
By using the two-point shortest path data structure
\cite{ref:GuibasOp89} on $z$ and $u_1$, the vertex $v$ can be easily
found in $O(\log n)$ time because $\overline{zv}$ is the first edge of
$\pi_1$.

Therefore, the sought vertex $v^*$ for the ray-rotating query on
$\rho$ can be found in $O(\log n)$ time.
The lemma thus follows.
\end{proof}

\section{The First Data Structure}
\label{sec:first}

Our goal is to compute $Vis(s)$
for any query segment $s$. Again, let $s=\overline{ab}$.
As discussed before, we need to resolve two issues.  The first issue is to
compute $Vis(a)$.  For this, as discussed in
\cite{ref:AronovVi02}, by using the ray-shooting data structure
\cite{ref:ChazelleRa94,ref:ChazelleVi89,ref:GuibasLi87,ref:HershbergerA95},
with $O(n)$ time preprocessing, we
can compute $Vis(a)$ in $O(|Vis(a)|\log n)$ time, where $|Vis(a)|$ is
the size of $Vis(a)$. Note that it might be easier to compute
$|Vis(a)|$ by using both the ray-shooting data structure and our
ray-rotating data structure in Lemma \ref{lem:rayrotate}.

The second issue is how to determine the next critical constraint
of $\calP$ that will be crossed by the point $p$ as $p$ moves from $a$
to $b$. Suppose at the moment we know $Vis(p)$ (initially,
$Vis(p)=Vis(a)$). Let
$\beta$ be the critical constraint that is crossed next by $p$.
To determine $\beta$, we first sketch an observation given in
\cite{ref:AronovVi02}.

Denote by $T(p)$ the shortest path tree rooted at $p$, which is the
union of the shortest paths in $\calP$ from $p$ to all vertices of $\calP$. A
vertex of $\calP$ is in $Vis(p)$ if and only if it is a child of $p$ in $T(p)$.
For any child $v$ of $p$ in the tree $T(p)$,
define the {\em principal
child} of $v$ to be the vertex $w$ among the children of $v$ in $T(p)$ such that
the angle formed by the rays $\overrightarrow{vw}$ and $\overrightarrow{pv}$ is
the smallest among all such angles (see Fig.~\ref{fig:principal}).
In other words, if we go from $p$
to any child of $v$ along the shortest path and we turn to the left
(resp., right), then $w$ is the first child of $v$ that is hit by
rotating counterclockwise (resp., clockwise) the line containing
$\overline{pv}$ around $v$.

To determine the next critical constraint $\beta$, the
following observation was shown in \cite{ref:AronovVi02}. Two children
of $p$ in $T(p)$ are {\em consecutive} if there is no other child of
$p$ between them in the cyclic order around $p$.

\begin{figure}[t]
\begin{minipage}[t]{\linewidth}
\begin{center}
\includegraphics[totalheight=1.0in]{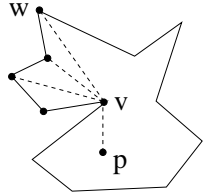}
\caption{\footnotesize Illustrating the principle child $w$ of $v$ in
$T(p)$.}
\label{fig:principal}
\end{center}
\end{minipage}
\vspace*{-0.15in}
\end{figure}

\begin{observation}\label{obser:10}{\em \cite{ref:AronovVi02}}
The next critical constraint $\beta$ is defined by two vertices of
$\calP$ that are either two consecutive children of $p$ or one, say
$v$, is a child of $p$ and the other is the principal child of $v$.
\end{observation}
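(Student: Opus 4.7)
Let $\beta$ be the next critical constraint that $p$ will cross. Writing $\beta=\overline{vp_v}$ with $v$ the (reflex) anchor vertex and $u$ the defining vertex, $u$ and $v$ are mutually visible, and at the instant $p$ reaches $\beta$ the three points $u,v,p$ are collinear with $v$ strictly between $u$ and $p$. My plan is to analyze $T(p)$ in the cell of $\VD(\calP)$ immediately preceding the crossing and to distinguish cases based on whether $u$ is visible from $p$ in that cell. As a preliminary I would note that $v$ is always a child of $p$ in $T(p)$: since $\beta\subseteq\calP$ and $v$ is an endpoint of $\beta$, once $p$ is close enough to $\beta$ the segment $\overline{pv}$ lies in a thin sleeve about $\beta$ and hence inside $\calP$, so $v\in Vis(p)$.

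Suppose first that $u$ is also visible from $p$, i.e., $u$ is also a child of $p$. I would claim $u$ and $v$ are consecutive in the cyclic order of children of $p$, arguing by contradiction: if a third child $x$ of $p$ lay strictly between $u$ and $v$ in angular order around $p$, then because the cyclic order of children of $p$ is an invariant within a cell of $\VD(\calP)$, this strict betweenness would persist up to the crossing. By continuity and a squeeze argument, the angular direction of $x$ from $p$ would converge to the common direction of $u$ and $v$ at the crossing, making $p,v,x$ collinear there; together with $p,v,u$ being collinear, this would force the three polygon vertices $u,v,x$ to be collinear, contradicting the general-position assumption. Hence $u$ and $v$ are consecutive, yielding case (a).

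Suppose instead that $u$ is not visible from $p$. I would then show that $u$ is the principal child of $v$ in $T(p)$. At the crossing moment the two-link path $p\to v\to u$ is straight, so $|p^{*}u|=|p^{*}v|+|vu|$, where $p^{*}$ denotes the crossing position; combining the triangle inequality $|pu|\le d(p,u)\le|pv|+|vu|$ with continuity of the geodesic distance, all three quantities tend to $|p^{*}u|$, which pins down the geodesic just before the crossing as the two-link path $p\to v\to u$. So $v$ is the parent of $u$ and $u$ is a child of $v$ in $T(p)$. Finally, $\angle(\overrightarrow{vu},\overrightarrow{pv})\to 0$ at the crossing, while by the general-position assumption no other child $w$ of $v$ in $T(p)$ can satisfy $p,v,w$ collinear at the crossing moment, so $\angle(\overrightarrow{vw},\overrightarrow{pv})$ stays bounded away from $0$; thus $u$ uniquely minimizes this angle and is the principal child of $v$, giving case (b).

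The main obstacle I anticipate is the geodesic argument in the second case --- specifically, verifying that just before $p$ crosses $\beta$, the shortest path from $p$ to $u$ bends only at $v$, so that $u$ is truly a child of $v$ in $T(p)$ and not some deeper descendant. The squeeze on $|pu|$ and $|pv|+|vu|$, combined with the constancy of the combinatorial structure of $T(p)$ within a single cell of $\VD(\calP)$, is the key ingredient for handling this cleanly.
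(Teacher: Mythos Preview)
The paper does not actually prove Observation~\ref{obser:10}; it is quoted verbatim from Aronov \emph{et al.}~\cite{ref:AronovVi02} and used as a black box. So there is no in-paper argument to compare your proposal against. What you have written is an independent proof sketch, and it is largely sound: the case split on whether the defining vertex $u$ is visible from $p$ in the current cell is the right dichotomy, and in case~(a) the squeeze argument on the cyclic order of children, together with the general-position assumption, correctly forces $u$ and $v$ to be consecutive. In case~(b), your geodesic-length squeeze, combined with the fact that the combinatorial type of $T(p)$ is constant on a cell of $\VD(\calP)$, does pin down the shortest path $p\to u$ as the two-link path $p\to v\to u$, so $u$ is indeed a child of $v$.

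The one point that deserves more care is the last step of case~(b): you show that $\angle(\overrightarrow{vu},\overrightarrow{pv})\to 0$ while $\angle(\overrightarrow{vw},\overrightarrow{pv})$ stays bounded away from $0$ for every other child $w$ of $v$, and conclude that $u$ is the principal child of $v$. Strictly speaking this only shows $u$ is principal for $p$ sufficiently close to the crossing point $p^{*}$, whereas the observation (and the algorithm that uses it) requires the conclusion at the \emph{current} position of $p$, which may lie anywhere in the cell. The fix is short but worth stating: the identity of the principal child of $v$ is itself constant on a cell of $\VD(\calP)$, because if it changed from $u$ to some $w$ as $p$ moved inside the cell, then at the switching moment $p,v,w$ would be collinear with $v$ between $p$ and $w$, i.e., $p$ would lie on the critical constraint with anchor $v$ and defining vertex $w$, contradicting that $p$ is in the interior of a cell. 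With this observation added, your argument is complete.
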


Based on Observation \ref{obser:10}, Aronov {\em et al.}
\cite{ref:AronovVi02} maintained $T(p)$ as $p$
moves and used the balanced triangulation of $\calP$ to determine the
principal children.
Their query algorithm takes $O(k\log^2 n)$ time, where
$k=|Vis(s)|$, and the preprocessing time and space of their data structure
\cite{ref:AronovVi02} are $O(n^2\log n)$ and $O(n^2)$, respectively.

Here, we take a different approach, although we still use
Observation \ref{obser:10}. Our data structure consists of
the following: the ray-shooting data structure
\cite{ref:ChazelleRa94,ref:ChazelleVi89,ref:GuibasLi87,ref:HershbergerA95},
the ray-rotating data structure in Lemma \ref{lem:rayrotate},
and a priority queue $Q$.

We assume $Vis(a)$ has been computed.
We use the ray-rotating data structure to determine the
principal children of all children of $p$, as follows. First of all,
since we already know $Vis(p)$ (initially $p=a$),
we have all $p$'s children in $T(p)$, sorted cyclically
around $p$. Note that we need not store the entire tree $T(p)$.
Consider any child $v$ of $p$ in $T(p)$ (i.e., $v$ is visible to $p$).
Let $w$ be
the principal child of $v$ that we are looking for.
Consider the ray $\rho(v)$
originating from $v$ with the direction from $p$ to $v$. By the
definition of principle children, $w$ is the first vertex of $\calP$
visible to $v$ that will be hit by the ray
$\rho(v)$ if we rotate $\rho(v)$ around $v$ along the direction that is
consistent
with the turning direction of the shortest paths from $p$ to the
children of $v$ in $T(p)$. It is easy to see that once we know the
above rotation direction, we can obtain $w$ in $O(\log n)$ time by our
ray-rotating data structure in Lemma \ref{lem:rayrotate}.

To determine the
above rotation direction, we only need to look at the relationship between the
line containing $\rho(v)$ and the two edges of $\calP$ adjacent to
$v$. Specifically, assume the line containing $\rho(v)$ has the same
direction as $\rho(v)$. For example, if the two adjacent edges of $v$ both lie to the
left of this line (e.g., see Fig.~\ref{fig:principal}), then we
should rotate $\rho(v)$ counterclockwise to determine $w$.
The other cases can be determined in a similar manner.
In summary, we can obtain the principle
child of $v$ in $O(\log n)$ time.  Initially, $p=a$ and
we determine the principle children of all children of $a$ in
$T(a)$ in $O(|Vis(a)|\log n)$ time since $a$ has $O(|Vis(a)|)$ children in $T(a)$.

We use the priority queue $Q$ to store the critical constraints specified in
Observation \ref{obser:10} that intersect the line segment $s$, where
the key of each such critical constraint used in the priority queue $Q$ is the position of its
intersection with $s$. Initially when $p=a$, we compute the critical constraints
defined by all pairs of consecutive children of $p$ in $T(p)$.
Similarly, for each child $v$ of
$T(p)$, we compute the critical constraint defined by $v$ and its
principal child.  Note that the total number of these critical
constraints is $O(|Vis(a)|)$. For each such critical constraint, we
check whether it intersects $s$, which can be done in $O(\log n)$ time
with the help of a ray-shooting query (we omit the details).
If the critical constraint intersects $s$, we insert it into $Q$;
otherwise, we do nothing. Then, the first critical constraint in $Q$ is
the next critical constraint that $p$ will cross as it moves. In
general, after $p$ crosses a critical constraint, $p$ either sees one
more vertex or sees one less vertex of $\calP$. In either case, there are only a
constant number of insertion or deletion operations on $Q$.
Specifically, consider the case when $p$ sees one more vertex $u$
(and an adjacent edge of $u$). By
the implementation given in \cite{ref:BoseEf02}, we can update the
combinatorial representation of $Vis(p)$ in constant time (i.e.,
insert $u$ and the adjacent edge to the appropriate positions of
the cyclic list of $Vis(p)$).  After this,
$u$ becomes a child of $p$ in the new tree $T(p)$, and
we can determine $p$'s two other children, say, $u_1$ and $u_2$, which
are cyclic neighbors of $u$, in constant time.
Then, for $u_1$, we check whether the critical constraint
defined by $u$ and $u_1$ intersects $s$, and if so, we insert it into
$Q$. For $u_2$, we do the same thing. Further, we compute the
principal child of $u$ in $T(p)$, in $O(\log n)$ time, by the approach
discussed above. For the other case where $p$ sees one less vertex
after it crosses the critical constraint, we perform similar processing.
After $p$ arrives at the other endpoint $b$ of $s$, we obtain the
combinatorial representation of $Vis(s)$.

We claim that the above algorithm takes $O(k\log
n)$ time (with $k=|Vis(s)|$).
Indeed, the initialization takes $O(|Vis(a)|\log n)$ time.
Clearly, $|Vis(a)|=O(k)$ since each vertex of $\calP$ that is in
$Vis(a)$ also appears in $Vis(s)$. If we consider every time when $p$
crosses a critical constraint as an {\em event}, then each event takes
$O(\log n)$ time. It has been shown in \cite{ref:AronovVi02} that the
total number of events as $p$ moves from $a$ to $b$ is $O(k)$. Hence,
the overall running time for computing $Vis(s)$ is $O(k\log n)$.

For the preprocessing, the ray-shooting data structure and the
ray-rotating data structure both need $O(n)$
time and space to build. Further, in our query algorithm, the space used in the
priority queue $Q$ is always bounded by $O(k)$. We conclude
this section with the following result.

\begin{theorem}\label{theo:10}
For any simple polygon $\calP$, a data structure can be built in
$O(n)$ time and $O(n)$ space, such that the visibility polygon
$Vis(s)$ can be computed in $O(|Vis(s)|\log n)$ time for any query
line segment $s$ in $\calP$.
\end{theorem}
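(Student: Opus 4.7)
The plan is to verify that the algorithm laid out in this section meets the three bounds of Theorem \ref{theo:10}: $O(n)$ preprocessing time and space, $O(\log n)$ cost for each auxiliary operation during a query, and a total of $O(k)$ such operations per query. I would split the argument into preprocessing, per-event work, and an event count.

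For the preprocessing claim I would simply bundle the two ingredients of this section: a linear-time, linear-space ray-shooting data structure and the ray-rotating data structure of Lemma \ref{lem:rayrotate}. Neither stores the visibility graph nor any explicit shortest-path tree, so the preprocessing is in $O(n)$ time and $O(n)$ space. The priority queue $Q$ is only created at query time and therefore contributes nothing to the preprocessing budget.

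For the query bound I would follow the moving-point scheme of Section \ref{sec:pre}. First I compute $Vis(a)$ in $O(|Vis(a)|\log n)$ time by combining ray-shooting with the ray-rotating data structure, noting that $|Vis(a)|=O(k)$ since every vertex visible from $a$ is also visible from $s$. Then I initialize $Q$ using Observation \ref{obser:10}: for each pair of consecutive children of $a$ in $T(a)$ I generate one candidate critical constraint in $O(1)$ time, and for each child $v$ I apply Lemma \ref{lem:rayrotate} to the ray from $v$ along the direction $\overrightarrow{av}$, using an $O(1)$ inspection of the two edges of $\calP$ incident to $v$ to decide the rotation sense, to obtain $v$'s principal child and hence one more candidate; each candidate is tested against $s$ via a ray-shooting query in $O(\log n)$ time and inserted into $Q$ if it intersects $s$. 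Each subsequent event (extraction of the minimum of $Q$) either adds or removes a single vertex from $Vis(p)$, which I refresh in $O(1)$ time using the update rules of \cite{ref:BoseEf02}; this creates only $O(1)$ new candidate constraints and invalidates $O(1)$ existing ones, and each insertion or deletion in $Q$ takes $O(\log n)$ time, including one call to Lemma \ref{lem:rayrotate} to compute the principal child of the newly inserted vertex.

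The only step that is not entirely routine is the bound on the number of events, and this is where I would lean on prior work: the proof in \cite{ref:AronovVi02} establishes that $p$ crosses at most $O(k)$ critical constraints on its way from $a$ to $b$. Multiplying $O(k)$ events by $O(\log n)$ per event and adding the $O(k\log n)$ initialization cost yields the claimed $O(k\log n)$ query time. Combined with the preprocessing discussion, this closes the proof of Theorem \ref{theo:10}.
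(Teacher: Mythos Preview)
Your proposal is correct and follows essentially the same argument the paper gives in Section~\ref{sec:first}: linear-time ray-shooting plus the ray-rotating structure of Lemma~\ref{lem:rayrotate} for preprocessing, initialization of $Vis(a)$ and the priority queue $Q$ in $O(|Vis(a)|\log n)$ time, $O(\log n)$ work per event via Observation~\ref{obser:10} and Lemma~\ref{lem:rayrotate}, and the $O(k)$ event bound quoted from~\cite{ref:AronovVi02}. The only item you omit is the remark that $Q$ never holds more than $O(k)$ entries, but that concerns query-time space rather than the bounds stated in the theorem.
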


\section{The Second Data Structure}
\label{sec:second}

In general, the preprocessing of our second data structure is very similar
to that in \cite{ref:BoseEf02}, and we make it faster by using better tools.
Our improvement on the query algorithm
is based on a number of new observations, e.g., a combinatorial bound of the
``zone" of the line segment arrangements in simple polygons.
For completeness, we first briefly discuss the approach in \cite{ref:BoseEf02}.

The preprocessing in \cite{ref:BoseEf02} has several steps, whose
running time is $O(n^3\log n)$ and is dominated by the first
two steps. The other steps together take $O(n^3)$ time. We show
below that the first two steps can be implemented in $O(n^3)$ time.

The preprocessing in \cite{ref:BoseEf02} first computes the visibility
decomposition $\VD(\calP)$ of $\calP$. Although there may be
$\Omega(n^2)$ critical constraints in $\calP$, it has been shown
\cite{ref:BoseEf02} that any line segment in $\calP$ can intersect
only $O(n)$ critical constraints, which implies that the size of
$\VD(\calP)$ is $O(n^3)$ instead of $O(n^4)$.
All critical constraints of $\calP$ can be
computed in $O(n^2)$ time, e.g., by the algorithm in \cite{ref:GhoshAn91}.
After that, to compute $\VD(\calP)$, we can use
Chazelle and Edelsbrunner's algorithm \cite{ref:ChazelleAn92Edelsbrunner}, which
computes the planar
subdivision induced by a set of $m$ line segments in $O(m\log m+I)$
time, where $I$ is the number of intersections of the line segments. In
our problem, we have $O(n^2)$ critical constraints each of which is a
line segment and the boundary of $\calP$ has $n$ edges. Therefore, by
using the algorithm in \cite{ref:ChazelleAn92Edelsbrunner}, we
can compute $\VD(\calP)$ in $O(n^3)$ time.
Alternatively, an approach mentioned in \cite{ref:HershbergerA95} can also be
used to compute $\VD(\calP)$ in $O(n^3)$ time, and we omit the details.


The second step of the preprocessing in \cite{ref:BoseEf02} is to
build a planar point location data structure on $\VD(\calP)$ in $O(n^3\log
n)$ time. By the approaches in \cite{ref:EdelsbrunnerOp86}
or \cite{ref:KirkpatrickOp83}, we can build such a point location data
structure in $O(n^3)$ time.

The remaining steps of our preprocessing algorithm are the same as those in
\cite{ref:BoseEf02},
which together take $O(n^3)$ time. Hence, the total preprocessing time is
$O(n^3)$. With the preprocessing, for each query point $q$ in $\calP$,
we can compute the visibility polygon $Vis(q)$ of $q$ in
$O(|Vis(q)|+\log n)$ time.

For a query segment $s=\overline{ab}$, the query algorithm in
\cite{ref:BoseEf02} first computes $Vis(a)$. Then, again, let a point
$p$ move on $s$ from $a$ to $b$. The algorithm maintains $Vis(p)$ as
$p$ moves on $s$, initially with $Vis(p)=Vis(a)$. Again, whenever $p$ crosses
a critical constraint, the combinatorial representation of $Vis(p)$
changes. Unlike our first data structure in Section \ref{sec:first},
here we have $\VD(\calP)$ explicitly. Therefore, we can determine the
next critical constraint in a much easier way. Specifically, the algorithm in
\cite{ref:BoseEf02} uses the following approach. Suppose $p$ is
currently in a cell of $\VD(\calP)$; then the next critical constraint
crossed by $p$ must be on the boundary of that cell. Since each cell is
convex, we can determine this critical constraint in $O(\log n)$
time. The algorithm stops when $p$ arrives at $b$. The total running
time of the query algorithm is $O(k\log n)$, where $k=|Vis(s)|$.

We propose a new and simpler query algorithm. We follow the
previous query algorithmic scheme. The only difference is when we determine
the next critical constraint that will be crossed by $p$, we simply
check each edge on the boundary of the current cell that contains $p$, and the running time is
linear in terms of the number of edges of the cell. Therefore, the
total running time of finding all critical constraints crossed by $p$
as it moves on $s$ is proportional to the total number of edges on all
faces of $\VD(\calP)$ that intersect $s$, and we denote by $F(s)$ the set of
such faces of $\VD(\calP)$. Let $E(s)$ denote the set of
edges of the faces in $F(s)$. Then the total time of finding
all critical constraints crossed by $p$ is $O(|E(s)|)$. Note that the
time of the overall query algorithm is the sum of the time for
computing $Vis(a)$ and the time for finding
all critical constraints crossed by $p$. Since $Vis(a)$
can be found in $O(|Vis(a)|+\log n)$ time, the running time of the query
algorithm is $O(\log n+|Vis(a)|+|E(s)|)$. Recall that $|Vis(a)|=O(k)$.
In Lemma~\ref{lem:10} below,
we will prove $|E(s)|=O(k)$. Consequently, the query
algorithm takes $O(\log n+k)$ time and Theorem
\ref{theo:20} below thus follows.

\begin{lemma}\label{lem:10}
The size of the set $E(s)$ is $O(k)$.
\end{lemma}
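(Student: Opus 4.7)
The plan is to split $E(s)$ into edges lying on critical constraints and edges lying on $\partial\calP$, bound each part by $O(k)$ using the zone theorem for line segment arrangements in simple polygons (stated in Section \ref{sec:mainintro}), and then combine.

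First I would partition $E(s)=E_1\cup E_2$, where $E_1$ is the subset of edges lying on critical constraints and $E_2$ is the subset of edges lying on $\partial\calP$. Let $C$ denote the set of critical constraints that contain at least one edge of $E_1$, and set $m=|C|$. By the zone theorem the author promises in Section \ref{sec:mainintro}, one immediately has $|E_1|=O(m)$. Hence it suffices to prove $m=O(k)$ and $|E_2|=O(k)$.

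Next I would establish $|F(s)|=O(k)$: as $p$ moves along $s$ from $a$ to $b$, each transition between two consecutive faces amounts to $p$ crossing a critical constraint, i.e., an event of the query algorithmic scheme that changes $Vis(p)$ by exactly one vertex/edge; the total number of such events is $O(k)$ by the analysis of \cite{ref:AronovVi02}. Thus the number of critical constraints actually crossed by $s$ is $O(k)$. To bound $m$, I would partition $C=C_0\cup C_1$, where $C_0$ are the critical constraints crossed by $s$ (giving $|C_0|=O(k)$) and $C_1$ are those bounding some face of $F(s)$ but not crossed by $s$. For $c\in C_1$ with anchor vertex $v$, the face $f\in F(s)$ with an edge on $c$ is convex and meets $s$, so $v$ (being on the boundary of $f$) is visible from $s$; thus $v\in Vis(s)$. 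I would then bound $|C_1|$ by a local charging argument: the critical constraints of $C_1$ sharing a visible anchor vertex $v$ correspond to sectors around $v$ whose adjacent faces lie in $F(s)$, and these are constrained by how $s$'s line of sight sweeps through $v$ as $p$ traverses $s$; summing over all visible anchor vertices telescopes to $O(k)$.

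For $E_2$, each edge is a subarc of $\partial\calP$ bounding a convex face of $F(s)$, hence visible from $s$. The visible portion of $\partial\calP$ from $s$ consists of $O(k)$ maximal arcs (the $\partial\calP$-arcs of $\partial Vis(s)$). These arcs get subdivided into the edges of $E_2$ only at endpoints of critical constraints in $C$ that lie on $\partial\calP$, and each critical constraint contributes $O(1)$ such endpoints, so the number of subdivision points is $O(m)$ and $|E_2|=O(k)+O(m)=O(k)$. Combining, $|E(s)|=|E_1|+|E_2|=O(k)$. The main obstacle I foresee is the local charging step bounding $|C_1|$, since a single visible reflex vertex may a priori anchor many critical constraints and a naive per-vertex charge would overshoot $k$; the argument must carefully exploit the structure of $s$'s intersection with the star of arrangement cells at each anchor vertex.
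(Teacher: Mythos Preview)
Your decomposition matches the paper's: split $E(s)$ into boundary edges and critical-constraint edges, bound the set $C$ of contributing critical constraints by $O(k)$, then invoke the zone theorem for the constraint edges and a direct charging for the boundary edges. Your treatment of the boundary edges $E_2$ and of the constraints $C_0$ actually crossed by $s$ is essentially what the paper does (Lemma~\ref{lem:30} and the first case of Lemma~\ref{lem:20}).

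The gap is exactly where you flag it, and one step before it is already wrong. For $c\in C_1$ with anchor vertex $v$, you assert ``$v$ (being on the boundary of $f$) is visible from $s$''. But $v$ need not lie on $\partial f$: the edge of $f$ contained in $c$ can be a proper sub-segment of $c$ far from $v$. The paper proves $v$ visible to $s$ differently: take $x$ in the interior of that edge and $y\in f\cap s$; since $f$ is a cell of $\VD(\calP)$, the segment $\overline{xy}$ meets no critical constraint other than $c$ at $x$, and one then argues that a point sliding from $x$ to $y$ never loses sight of $v$ (otherwise a new critical constraint anchored elsewhere with defining vertex $v$ would cross $\overline{xy}$).

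For the charging on $C_1$, your ``sectors around $v$ / telescoping'' sketch is not the right picture and does not obviously give $O(1)$ per anchor vertex. The paper's argument is much simpler and avoids any sweep: the critical constraints anchored at $v$ that do \emph{not} cross $s$ all share the endpoint $v$ and have their other endpoint on $\partial\calP$, so they partition $\calP$ into interior-disjoint regions; $s$ lies entirely in one such region $R(s)$, and only the at most two constraints bounding $R(s)$ can contribute an edge to $E(s)$ (any other constraint is separated from $s$ by one of these two, contradicting the $\overline{xy}$ property above). Hence each visible vertex is charged at most twice as an anchor, and together with the $C_0$ bound this gives $|C|\le 4|V(s)|=O(k)$.
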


\begin{theorem}\label{theo:20}
For any simple polygon $\calP$, we can build a data structure of
size $O(n^3)$ in $O(n^3)$ time that can compute $Vis(s)$ in
$O(|Vis(s)|+\log n)$ time for each query segment $s$ in $\calP$.
\end{theorem}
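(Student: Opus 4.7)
The plan is to derive Theorem~\ref{theo:20} from Lemma~\ref{lem:10} together with the preprocessing and query analysis already assembled in the discussion above. Preprocessing: the bottleneck first two steps of \cite{ref:BoseEf02} are sped up to $O(n^3)$ via \cite{ref:ChazelleAn92Edelsbrunner} and \cite{ref:KirkpatrickOp83}; the remaining steps of \cite{ref:BoseEf02} already run in $O(n^3)$, giving total preprocessing $O(n^3)$ time and size. Query: after point-locating $a$ in $\VD(\calP)$ and reading off the precomputed visibility polygon of its cell, $Vis(a)$ is produced in $O(|Vis(a)| + \log n) = O(k + \log n)$ time; the subsequent walk of $p$ from $a$ to $b$ through cells of $\VD(\calP)$, with constant-time combinatorial updates to $Vis(p)$ on each critical-constraint crossing, runs in $\Theta(|E(s)|)$ time. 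Hence the total query cost is $O(\log n + k + |E(s)|)$, and the theorem reduces to showing $|E(s)| = O(k)$.

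To prove Lemma~\ref{lem:10}, I would split $E(s) = E_S(s) \cup E_\partial(s)$, where $E_S(s)$ collects the edges lying on critical constraints and $E_\partial(s)$ collects the edges lying on $\partial\calP$. For $E_S(s)$, I invoke the tight zone bound for line segments in a simple polygon developed in this section, taking $S$ to be the set of all critical constraints, $\calA = \VD(\calP)$, and $Z(s) = F(s)$; it yields $|E_S(s)| = O(m)$, where $m$ is the number of distinct critical constraints contributing at least one edge to a face of $F(s)$. For $E_\partial(s)$, each edge is a maximal subedge of $\partial\calP$ whose endpoints are either $\calP$-vertices incident to some face of $F(s)$ or the $\partial\calP$-endpoints $p_v$ of contributing critical constraints. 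The $\calP$-vertices on faces of $F(s)$ all lie in $Vis(s)$: if $v$ is a vertex of some $f \in F(s)$, interior points of $f$ near $v$ see $v$, and by the combinatorial invariance of visibility across a cell of $\VD(\calP)$ so does every point of $f \cap s$; this gives at most $k$ such delimiters, while the $p_v$-endpoints add $O(m)$ more. Hence $|E_\partial(s)| = O(k + m)$, and Lemma~\ref{lem:10} reduces to $m = O(k)$.

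For $m = O(k)$, my plan is a charging argument: route each contributing critical constraint $c = \overline{v p_v}$ (with anchor $v$) to $v$ and use the same cell-invariance argument to conclude $v \in Vis(s)$. To bound the multiplicity at a fixed $v$ by $O(1)$, I would order the critical constraints anchored at $v$ by their angular direction around $v$ and argue that two such constraints lying on the same side of $v$ and both hosting zone-face edges would force $s$ to pierce a third constraint nested between them, contradicting the choice of consecutive contributors. Combining, $|E(s)| = O(m) + O(k + m) = O(k)$, which proves Lemma~\ref{lem:10} and therefore Theorem~\ref{theo:20}.

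The main obstacle is the tight zone bound $\Lambda = O(m)$ inside a simple polygon: the generic $O(m\,\alpha(m))$ bound of \cite{ref:EdelsbrunnerAr92} is too weak, and shaving the $\alpha(m)$ factor requires genuinely exploiting the structural hypotheses (all endpoints of $S$ on $\partial\calP$, and the forced nesting of faces of $\calA$ induced by the polygon). Once that combinatorial bound is established, both the charging that yields $m = O(k)$ and the bookkeeping for $E_\partial(s)$ are routine, and Theorem~\ref{theo:20} drops out.
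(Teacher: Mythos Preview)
Your overall plan matches the paper's: the preprocessing analysis, the query decomposition into $O(\log n + |Vis(a)| + |E(s)|)$, the split of $E(s)$ into boundary edges and critical-constraint edges, and the appeal to the zone theorem for the latter are all exactly what the paper does. The reduction to $m = O(k)$ (the paper's Lemma~\ref{lem:20}) is also the right bottleneck.

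The gap is in your charging argument for $m = O(k)$. You propose to charge every contributing critical constraint to its \emph{anchor} vertex $v$ and then bound the multiplicity at $v$ by $O(1)$. The visibility claim $v \in Vis(s)$ is fine, but the multiplicity bound fails for constraints that \emph{intersect} $s$. All critical constraints anchored at a fixed reflex vertex $v$ form a fan of chords emanating from $v$, and a single segment $s$ lying across that fan can cross $\Theta(n)$ of them; every one of those crossed constraints bounds a face of $F(s)$ and hence lies in $C(s)$. Your ``consecutive contributors'' argument does not exclude this: when $s$ pierces many fan chords, each of them genuinely hosts zone-face edges, and no contradiction arises. So charging solely to the anchor cannot give $O(1)$ multiplicity.

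The paper's fix is precisely a case split on whether $c$ intersects $s$. If $c$ does \emph{not} intersect $s$, your anchor-vertex charging works (and the paper proves the multiplicity is at most $2$ via the partition of $\calP$ by the fan at $v$). If $c$ \emph{does} intersect $s$, the paper charges $c$ to its \emph{defining} vertex $u$ instead, shows $u \in Vis(s)$, and invokes the observation from \cite{ref:BoseEf02} that any segment in $\calP$ crosses at most two critical constraints sharing a given defining vertex. With this two-case charging, every vertex of $V(s)$ is charged at most four times and $|C(s)| \le 4k$ follows. Once you patch this step, the rest of your argument goes through and coincides with the paper's.
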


\subsection{Proving Lemma \ref{lem:10}}

It remains to prove Lemma \ref{lem:10}.
Note that each edge of $E(s)$
lies either on $\partial\calP$ or on a critical constraint.
We partition the set $E(s)$ into two subsets $E_1(s)$ and $E_2(s)$.
For each edge of $E(s)$, if it lies on $\partial\calP$, then
it is in $E_1(s)$; otherwise, it is in $E_2(s)$. We will show that
both $|E_1(s)|=O(k)$ and $|E_2(s)|=O(k)$ hold.

Denote by $C(s)$ the set of all critical constraints each of which
contains at least one edge of $E(s)$.

\begin{lemma}\label{lem:20}
The size of the set $C(s)$ is $O(k)$.
\end{lemma}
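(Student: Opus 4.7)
My plan is to split $C(s) = C_1(s) \sqcup C_2(s)$ according to whether $s$ crosses the constraint, and to bound each part separately by $O(k)$. For $C_1(s)$ (the CCs crossed by $s$), each $\gamma \in C_1(s)$ corresponds to a unique event as the sweep point $p$ moves from $a$ to $b$; at this event a single vertex or edge is inserted into or deleted from the combinatorial representation of $Vis(p)$. Every inserted item survives in $Vis(s)$ and every deleted item was previously in $Vis(s)$, so by the standard argument of \cite{ref:AronovVi02,ref:BoseEf02} recalled above, the total event count is $O(|Vis(s)|) = O(k)$, yielding $|C_1(s)| = O(k)$.

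For $C_2(s)$, the key observation is that each $\gamma \in C_2(s)$ contains an edge $e$ on some face $f \in F(s)$ whose opposite side is a face $f' \notin F(s)$ (since $s$ does not cross $\gamma$). Hence $e$ lies on the boundary of the corridor $\calC := \bigcup_{f \in F(s)} \overline{f}$, and since each $\gamma \in C_2(s)$ supplies at least one such edge, $|C_2(s)|$ is bounded by the number of CC-edges on $\partial \calC$. Because each cell $f \in F(s)$ is convex and contains a point of $s$, we have $\calC \subseteq Vis(s)$, and $Vis(s)$ is closed (as $s$ is compact and $\calP$ is closed, visibility passes to Hausdorff limits). I plan to bound the CC-edge count of $\partial \calC$ by $O(k)$ by traversing $\partial \calC$ and charging each CC-edge to a distinct feature of $Vis(s)$---either a vertex or window-edge of the combinatorial representation of $Vis(s)$ accessible along the traversal, or a polygon-edge arc already counted in $|E_1(s)| = O(k)$.

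The main obstacle is making the above charging $O(1)$-to-one. The delicate case arises at a reflex vertex $v$ of $\calP$ where several CCs anchored at $v$ may simultaneously contribute edges of $\partial \calC$. I plan to resolve this with an angular argument at $v$: since $v$'s visibility region from $s$ is a finite union of intervals of $s$ whose endpoints are themselves charged to events already counted in $|C_1(s)|$, only $O(1)$ angular ``extremes'' at $v$ can correspond to CCs of $C_2(s)$ contributing edges to $\partial \calC$ near $v$. Combining $|C_1(s)| = O(k)$ and $|C_2(s)| = O(k)$ then gives $|C(s)| = O(k)$.
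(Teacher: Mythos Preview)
Your split $C(s)=C_1(s)\sqcup C_2(s)$ matches the paper's case distinction, and your bound on $|C_1(s)|$ via the $O(k)$ event count from \cite{ref:AronovVi02} is correct (the paper instead quotes the observation from \cite{ref:BoseEf02} that $s$ meets at most two critical constraints with any fixed defining vertex, but either argument works).

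The $C_2(s)$ part, however, has two genuine gaps. First, you invoke $|E_1(s)|=O(k)$, but in this paper that is Lemma~\ref{lem:30}, whose proof \emph{uses} the present lemma; as written your argument is circular, and you give no independent bound on the polygon-edge arcs of $\partial\calC$. Second, the ``angular argument at $v$'' is only a plan: you neither establish that the anchor vertex $v$ of a constraint in $C_2(s)$ is visible to $s$, nor that at most $O(1)$ constraints of $C_2(s)$ share a given anchor. (Incidentally, in a simple polygon the set of points on $s$ visible from a fixed vertex is a single subsegment, not a general finite union of intervals, so your interval-endpoint bookkeeping is unnecessary.)

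The paper bypasses the detour through $\partial\calC$ entirely. For $\gamma\in C_2(s)$ with anchor $v$, it shows directly that $v$ is visible to $s$: take an interior point $x$ of an edge $e\subset\gamma$ bounding a face $f\in F(s)$, take $y\in s\cap f$, and walk a point $q$ along $\overline{xy}\subset f$; if $\overline{vq}$ ever became blocked, the blocking vertex together with $v$ would define a second critical constraint crossing $\overline{xy}$, contradicting that $\overline{xy}$ lies in the single cell $f$. Then, since all constraints anchored at $v$ that miss $s$ are chords through $v$ partitioning $\calP$, only the at most two of them bounding the region containing $s$ can possibly contribute an edge to any face of $F(s)$. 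Charging each $\gamma\in C_2(s)$ to its anchor and each $\gamma\in C_1(s)$ to its defining vertex thus gives at most four charges per visible vertex, hence $|C(s)|\le 4|V(s)|\le 4k$. This direct charging is both simpler than your corridor-boundary traversal and self-contained.
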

\begin{proof}
Denote by
$V(s)$ the set of vertices of $\calP$ visible to $s$. Clearly,
$|V(s)|\leq k$. Consider an arbitrary critical constraint $c\in C(s)$.
To prove the lemma, we will charge $c$ to a vertex of $V(s)$.
We will show that each vertex of $V(s)$ will be charged at most
a constant number of times, which will lead to the lemma.

Assume the defining vertex of $c$ is $u$ and the anchor vertex
of $c$ is $v$. By
the definition of $C(s)$, $c$ contains at least one edge of $E(s)$.
Depending on whether $c$ intersects $s$, there are two cases.

\begin{enumerate}
\item
If $c$ intersects $s$, then the defining
vertex $u$ of $c$ must be visible to $s$ (see Fig.~\ref{fig:intersect}).
To see this, let $q$ be the
intersection of $c$ and $s$. Hence, $q$ is visible to $v$. One may
consider that the visibility between $q$ and $u$ is blocked by $v$. Due to
our assumption that each endpoint of $s$ is not collinear with two
vertices of $\calP$, $q$ is not an endpoint of $s$. Hence,
there is always a point on $s$ infinitely close to $q$ that is visible
to $u$, and thus $u$ is visible to $s$. We charge $c$ to its defining vertex $u$.

\begin{figure}[t]
\begin{minipage}[t]{0.49\linewidth}
\begin{center}
\includegraphics[totalheight=1.0in]{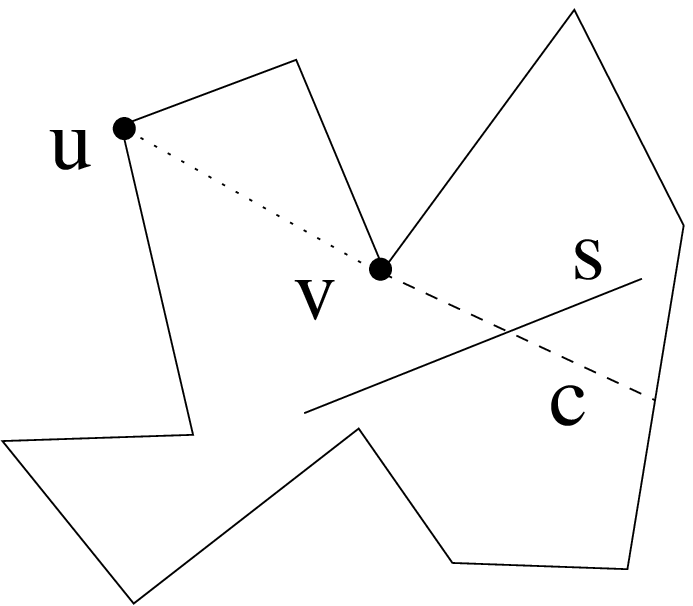}
\caption{\footnotesize Illustrating the case where $c$ intersects $s$.}
\label{fig:intersect}
\end{center}
\end{minipage}
\hspace{0.06in}
\begin{minipage}[t]{0.49\linewidth}
\begin{center}
\includegraphics[totalheight=1.0in]{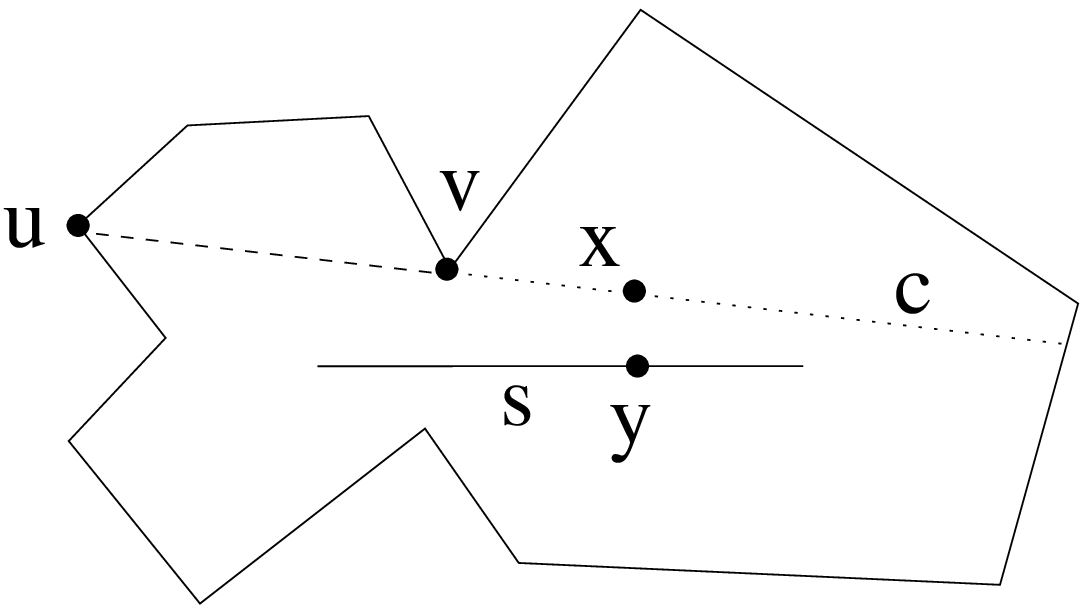}
\caption{\footnotesize Illustrating the case where $c$ does not intersect $s$.}
\label{fig:nonintersect}
\end{center}
\end{minipage}
\vspace*{-0.15in}
\end{figure}

\item

If $c$ does not intersect $s$ (see Fig.~\ref{fig:nonintersect}),
then we show below that the anchor vertex $v$ of $c$ must be visible
to $s$, and further, there are at most two critical constraints in
$C(s)$ such that they do not intersect $s$ and their
anchor vertices are $v$. We will charge $c$ to $v$.

We first
prove that $v$ is visible to $s$.
Since $c$ contains at least one edge of $E(s)$, there must be a face
$f$ of $\VD(\calP)$ intersecting $s$ and the
boundary of $f$ has an edge $e$ contained in $c$.
Let $x$ be an arbitrary interior point of $e$ and let $y$ be an arbitrary point
on $s$ that is contained in $f$ (see Fig.~\ref{fig:nonintersect}).
Since $f$ is convex,
$\overline{xy}$ is contained in $f$, i.e., $x$ is visible to $y$
and $\overline{xy}$ does not intersect any other critical constraint
of $\calP$ than $c$ (at $x$). To prove $y$ is visible to the vertex $v$,
consider a point $q$ on $\overline{xy}$ moving from $x$ to $y$. We
claim that $v$ is always visible to $q$ as $q$ moves. Indeed,
initially $q$ is at $x$, and $v$ is visible to $x$ because $x$ is on
the critical constraint $c$ and $v$ is the anchor vertex of $c$.
Suppose to the contrary $v$ is not visible to $q$ at some moment as $q$ moves.
Then, at some moment, $\overline{vq}$ must encounter a vertex of
$\calP$, say, $w$. In other words, $w$ is on $\overline{vq}$. Then,
the two vertices $v$ and $w$ define a critical constraint with $v$ as
the defining vertex and $w$ as the anchor vertex, and the critical
constraint intersects $\overline{xy}$ at $q$. Note that this
critical constraint is not $c$ because $v$ is the anchor vertex of
$c$. Hence, we obtain a contradiction because $c$ is the only critical
constraint that intersects $\overline{xy}$. Therefore, we conclude that
$v$ is visible to $y$.

Next, we prove that there are at most two critical constraints in
$C(s)$ such that they do not intersect $s$ and their
anchor vertices are $v$. Let $C_v$ denote the set of all critical constraints
each of which has $v$ as its anchor vertex and does not intersect $s$.
Our goal is to prove that $C_v$ has at most two critical constraints in
$C(s)$.
Note that each critical constraint in $C_v$ has $v$ as an endpoint and
its other endpoint is on $\partial \calP$. Hence, the critical
constraints of $C_v$ partition $\calP$ into $|C_v|+1$
interior-disjoint regions
and one region contains $s$ (see Fig.~\ref{fig:partition}). Let $R(s)$ be
the region containing $s$. Clearly, $R(s)$ has at most two critical
constraints of $C_v$, say $c_1$ and $c_2$, on its boundary.  We claim
that for any critical constraint $c'\in C_v\setminus\{c_1,c_2\}$, $c'$
cannot contain an edge of $E(s)$. Indeed, assume to the contrary $c'$
contains an edge of $E(s)$. Then, as discussed before, we can always
find such two points $x$ and $y$ as in Fig.~\ref{fig:nonintersect}.
Recall that $\overline{xy}$ is in $\calP$ and $\overline{xy}$ does not
intersect any other critical constraint of $\calP$ than $c'$.
Since $c'$ is outside $R(s)$, $x\in c'$ is outside $R(s)$. However,
due to $y\in s$ and $s\subset R(s)$, $\overline{xy}$ must intersect either
$c_1$ or $c_2$, which contradicts with that $\overline{xy}$ does not
intersect any other critical constraint of $\calP$ than $c'$.
Hence, $c'$ cannot contain an edge of $E(s)$ and $c'\not\in
C(s)$. Therefore, we obtain that
$C_v$ has at most two critical constraints in $C(s)$.

\begin{figure}[t]
\begin{minipage}[t]{\linewidth}
\begin{center}
\includegraphics[totalheight=1.0in]{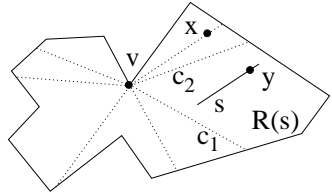}
\caption{\footnotesize Illustrating the critical constraints (the
dotted line segments) with $v$ as
their anchor vertex that do not intersect $s$.}
\label{fig:partition}
\end{center}
\end{minipage}
\vspace*{-0.15in}
\end{figure}
\end{enumerate}

According to our discussion above, in the first case (i.e., $c$
intersects $s$), we charge $c$ to
its defining vertex $u$, which is in $V(s)$. In the second case (i.e., $c$
does not intersect $s$), we charge $c$ to its anchor vertex $v$, which
is also in $V(s)$.
An observation in \cite{ref:BoseEf02} shows that for any line segment in
$\calP$, for any vertex $u$ of $\calP$, the line segment intersects
at most two critical constraints with $u$ as their defining vertex.
Therefore, for any vertex $u$ of $\calP$,
$u$ can be charged at most twice as a defining vertex.
On the other hand, we have shown that, as an anchor vertex,
$v$ has at most two critical constraints in $C(s)$ that do not
intersect $c$. Therefore, for any vertex $v$ of $\calP$,
$v$ can be charged at most twice as an anchor vertex.
Hence, any vertex in $V(s)$ can be charged at most four times, twice
as an anchor vertex and twice as a defining vertex. In other words,
$|C(s)|\leq 4\cdot |V(s)|\leq 4\cdot k$.

The lemma thus follows.
\end{proof}

In the next lemma, we bound the size of the subset $E_1(s)$.

\begin{lemma}\label{lem:30}
The size of the set $E_1(s)$ is $O(k)$.
\end{lemma}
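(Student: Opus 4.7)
The plan is to charge each edge in $E_1(s)$ either to a vertex of $\calP$ visible to $s$ or to a critical constraint in $C(s)$, and then use $|V(s)| \le k$ together with Lemma~\ref{lem:20} to conclude. The starting observation is that an edge $e \in E_1(s)$ lies on $\partial\calP$ and bounds some face $f \in F(s)$. Its two endpoints are vertices of the arrangement $\VD(\calP)$ lying on $\partial\calP$, and each such arrangement vertex is either a vertex of $\calP$ or a point where a critical constraint meets $\partial\calP$.

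First I would handle the case in which an endpoint $u$ of $e$ is a vertex of $\calP$. Since $f\in F(s)$, pick any point $y\in f\cap s$. Because faces of $\VD(\calP)$ are convex and $u$ is a corner of $f$, the segment $\overline{uy}$ lies in $f\subseteq\calP$, so $u$ is visible to $y\in s$ and hence $u\in V(s)$. I would charge this endpoint of $e$ to $u$.

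Next I would handle the case in which an endpoint $p$ of $e$ lies in the interior of an edge of $\partial\calP$. Then $p$ must be the point where some critical constraint $c$ meets $\partial\calP$, since critical constraints are the only other objects that can create an arrangement vertex. As the boundary of $f$ is traversed through $p$, the edge following $e$ cannot continue along $\partial\calP$ (otherwise $p$ would not be a vertex of $\VD(\calP)$), so it must lie on $c$. Hence $c$ contains an edge of $f$, which puts $c\in C(s)$. I would charge this endpoint of $e$ to $c$.

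Finally I would complete the double-counting: each vertex of $\calP$ is an endpoint of at most two edges of $\partial\calP$ and so is charged at most twice; likewise each critical constraint has at most one endpoint in the interior of $\partial\calP$, which is adjacent to at most two edges of $\partial\calP$. Summing over both endpoints of each edge,
\[
2|E_1(s)| \;\le\; 2|V(s)| + 2|C(s)| \;\le\; 2k + 8k \;=\; O(k),
\]
using Lemma~\ref{lem:20} in the last step. The main obstacle I anticipate is the second case: I need the local arrangement argument at $p$ to be airtight, so that the critical constraint incident to $p$ really contributes an edge to $f$, not merely passes through $p$ without bordering $f$; this is where the characterization of arrangement vertices on $\partial\calP$ and the fact that faces of $\VD(\calP)$ are bounded only by pieces of $\partial\calP$ and critical constraints do the work.
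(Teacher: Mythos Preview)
Your argument is correct and follows essentially the same charging scheme as the paper: each edge of $E_1(s)$ is charged either to a visible vertex of $\calP$ or to a critical constraint in $C(s)$, with each target receiving $O(1)$ charges. The paper's version is slightly terser---it charges the whole edge once (to a vertex endpoint if one exists, else to an incident critical constraint) rather than charging both endpoints separately---but the idea and the counting are the same. Your extra care in justifying $u\in V(s)$ via convexity of the cell and $c\in C(s)$ via the adjacent-edge argument at $p$ fills in details the paper leaves implicit.
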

\begin{proof}
Denote by $V(s)$ the set of vertices
of $\calP$ visible to $s$. Clearly, $|V(s)|\leq k$.
Consider an edge $e$ in $E_1(s)$.  To prove the
lemma, we will charge $e$ either to a vertex of $V(s)$ or to a
critical constraint of $C(s)$. We will also show that each vertex of
$V(s)$ will be charged at most twice and each critical constraint of
$C(s)$ will be charged at most four times. Consequently, due to
$|V(s)|\leq k$ and $|C(s)|=O(k)$ (by Lemma \ref{lem:20}), the lemma follows.

By the definition of $E_1(s)$, $e$ is on an edge of $\calP$. If $e$
has an endpoint that is a vertex of $\calP$, say $u$, then clearly
$u$ is visible to $s$. We charge $e$ to $u$. Otherwise, both endpoints
of $e$ are endpoints of some critical constraints, and we charge $e$ to an
arbitrary one of the two such critical constraints.

For each vertex of $\calP$, it has two adjacent edges in $\calP$, and
therefore, it has at most two adjacent edges in $E_1(s)$. Hence, each
vertex of $V(s)$ can be charged at most twice. On the other hand,
each critical constraint has two endpoints, and each endpoint is
adjacent to at most two edges in $E_1(s)$. Therefore, each critical
constraint of $C(s)$ can be charged at most four times.
\end{proof}

To prove Lemma \ref{lem:10}, it remains to show $|E_2(s)|=O(k)$. To
this end, we discuss a more general problem, in the following.

Assume we have a set $S$ of line segments in $\calP$ such that the endpoints of
each such segment are on $\partial\calP$.
Let $\calA$ be the arrangement formed by the line segments of $S$ and
the edges of $\partial\calP$.
For any line segment $s$ in $\calP$ (the endpoints of $s$ need
not be on $\partial\calP$), the {\em zone} of $s$ is defined to be
the set of all faces of $\calA$ that $s$ intersects. Denote by
$Z(s)$ the zone of $s$. For each edge of a face in $\calA$, it
either lies on a line segment of $S$ or lies on $\partial\calP$; if it
is the former case, we call the edge an {\em $S$-edge}.
We define the {\em complexity} of $Z(s)$ as
the number of $S$-edges of the faces in $Z(s)$ (namely,
the edges on $\partial\calP$ are not considered), denoted by $\Lambda$.
Our goal is to find a good upper bound for $\Lambda$.
By using the zone theorem for the general line segment
arrangement \cite{ref:EdelsbrunnerAr92}, we can easily obtain $\Lambda=O(|S|\alpha(|S|))$,
where $\alpha(\cdot)$ is the functional inverse
of Ackermann's function \cite{ref:HartNo86}.

Denote by $S_s$ the set of line segments in $S$ that intersect
$Z(s)$, i.e., each segment in $S_s$ contains at least one $S$-edge of
$Z(s)$. Let $m=|S_s|$ (note that $m\leq |S|$).  By using the property
that each segment in $S$ has both endpoints on $\partial\calP$, we show $\Lambda=O(m)$
in Theorem \ref{theo:zone} below, which we call the {\em zone theorem}.
The proof of  Theorem \ref{theo:zone} is given in Section \ref{sec:zone}.

\begin{theorem}\label{theo:zone}
The complexity of $Z(s)$ is $O(m)$.
\end{theorem}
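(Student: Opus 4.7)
My plan is to prove $\Lambda = O(m)$ by classifying the $S$-edges counted by $\Lambda$ and then exploiting the chord property---each segment in $S$ has both endpoints on $\partial\calP$, so any two such segments cross at most once and the arrangement inherits pseudoline-like combinatorics.

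I would first partition the $S$-edges counted by $\Lambda$ into \emph{crossing edges}, those that $s$ passes through (whose two adjacent faces are both in $Z(s)$, contributing $2$ each to $\Lambda$), and \emph{bounding edges}, those that lie on the topological boundary of $\bigcup_{f \in Z(s)} f$ (contributing $1$ each). Since $s$ is a line segment and each chord of $S$ meets $s$ in at most one point, any chord $\ell$ crossed by $s$ has at least one $S$-edge (the one containing the crossing) bordering two faces of $Z(s)$ and hence belongs to $S_s$; therefore the number of crossing edges is at most $m$, contributing $O(m)$ to $\Lambda$.

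For the bounding edges, I would proceed by induction on $m$, modeled after the classical zone-theorem proof for line arrangements. In the inductive step, my plan is to locate a chord $\ell^* \in S_s$ whose removal from $S$ reduces $|S_s|$ by $1$ and also the number of bounding $S$-edges of $Z(s)$ by only $O(1)$. I would identify $\ell^*$ using the $2m$ endpoints of $S_s$-chords on $\partial\calP$, which form a chord diagram, and pick $\ell^*$ via an extremal criterion such as having extremal slope among chords of $S_s$, or bordering an outermost portion of $Z(s)$ on one side of $s$. Removing $\ell^*$ eliminates its $S$-edges and merges pairs of adjacent $S$-edges on other chords across the crossings on $\ell^*$; I would then argue that only $O(1)$ bounding edges change during this operation.

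The main obstacle is showing rigorously that such a chord $\ell^*$ always exists and contributes at most $O(1)$ to the bounding edges, especially in pathological configurations (e.g., when every pair of $S_s$-chords mutually crosses, or when $s$ also crosses $\ell^*$ so that both sides of $\ell^*$ contribute to $Z(s)$). The chord property is essential here: because any two chords cross at most once and all endpoints lie on the common curve $\partial\calP$, the arrangement inherits the pseudoline-like combinatorics that the classical zone theorem for line arrangements exploits, avoiding the Davenport--Schinzel obstruction responsible for the $O(m\alpha(m))$ bound in general segment arrangements. Handling non-convex $\partial\calP$---which may interleave polygon edges with bounding $S$-edges---requires careful bookkeeping but does not affect the $O(m)$ count of $S$-edges, since $\partial\calP$-edges are excluded from $\Lambda$ by definition.
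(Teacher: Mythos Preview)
Your high-level instinct---an induction in the style of the classical zone theorem, exploiting that chords of a simple polygon cross pairwise at most once---matches the paper. But your plan has a genuine gap: it does not distinguish chords of $S_s$ that cross $s$ from those that do not, and your inductive step does not work for the latter. The paper splits $S_s$ into $S_s^2$ (chords crossing $s$, size $m_2$) and $S_s^1$ (chords not crossing $s$, size $m_1$) and treats them very differently. For $S_s^1$ there is no induction at all: one proves that the single face $F_s$ containing $s$ in the arrangement of $S_s^1$ and $\partial\calP$ is \emph{geodesically convex} in $\calP$ (shortest paths in $\calP$ between points of $F_s$ stay in $F_s$), and hence each chord of $S_s^1$ can contribute at most one edge to $\partial F_s$, giving $O(m_1)$ edges outright. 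This geodesic-convexity step is precisely the missing ingredient in your plan. If instead you try to add a non-crossing chord $\ell^*$ last in an induction, after many crossing chords are already present, $\ell^*$ may be chopped into many pieces inside $Z(s)$, and neither ``extremal slope'' nor ``bordering an outermost portion'' gives any reason only $O(1)$ of those pieces become bounding edges.

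The induction itself runs only over $S_s^2$, starting from $F_s$, and the correct extremal criterion is the one from the classical line-arrangement zone proof: take the chord whose crossing with $s$ is \emph{rightmost} (for $s$ horizontal). With that choice, and using that chord endpoints lie on $\partial\calP$, the portion of the added chord beyond its first intersection with another $S_s^2$-chord on each side of $s$ bounds a region disjoint from $Z(s)$, so adding it creates only $O(1)$ new left-bounding $S_s^2$-edges; the geodesic-convexity lemma is reused to show it splits at most two $S_s^1$-edges. Your proposed criteria do not produce this separation. A smaller issue: your crossing/bounding dichotomy is not exhaustive, since an edge on a crossing chord can border two faces of $Z(s)$ without itself being crossed by $s$ (faces here need not be convex); the paper sidesteps this by counting left-bounding and right-bounding edges separately rather than trying to separate ``interior'' from ``boundary'' edges.
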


Now consider
our original problem of proving $|E_2(s)|=O(k)$. By using the zone theorem, we have the following corollary.

\begin{corollary}\label{cor:10}
The size of the set $E_2(s)$ is $O(k)$.
\end{corollary}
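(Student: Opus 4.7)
The plan is to instantiate the zone theorem (Theorem \ref{theo:zone}) with the set $S$ chosen to be the set of all critical constraints of $\calP$. I first need to verify the hypothesis of the zone theorem: each critical constraint has both of its endpoints on $\partial\calP$, since by construction an anchor vertex $v$ lies on $\partial\calP$ and the extension point $w$ is defined to be the first point of $\partial\calP$ hit. So this choice of $S$ is admissible. The arrangement $\calA$ formed by $S$ together with $\partial\calP$ is then exactly the visibility decomposition $\VD(\calP)$ (restricted to $\calP$), and for the query segment $s$ the zone $Z(s)$ in this arrangement coincides with $F(s)$, the set of faces of $\VD(\calP)$ intersecting $s$.

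Next I would match up the edge sets: the $S$-edges of $Z(s)$ in the zone-theorem terminology are precisely the edges of faces of $F(s)$ that lie on critical constraints rather than on $\partial\calP$, which is exactly the set $E_2(s)$. So the complexity $\Lambda$ of the zone equals $|E_2(s)|$, and Theorem \ref{theo:zone} gives $|E_2(s)| = O(m)$, where $m$ is the number of critical constraints containing at least one edge of $E_2(s)$.

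The final step is to bound $m$. By definition, every critical constraint containing an edge of $E_2(s)$ also contains an edge of $E(s)$, so it belongs to $C(s)$. Therefore $m \le |C(s)|$. Applying Lemma \ref{lem:20} we get $m \le |C(s)| = O(k)$, and combining with the zone-theorem bound yields $|E_2(s)| = O(m) = O(k)$, which is the corollary.

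The substantive content here is really Theorem \ref{theo:zone} (whose proof is the main obstacle, handled separately in Section \ref{sec:zone}) and Lemma \ref{lem:20}; the corollary itself is essentially a bookkeeping step that identifies the visibility decomposition as an instance of the segment arrangement considered in the zone theorem and that converts the ``number of segments containing a zone edge'' bound into a bound in terms of $k$ via $C(s)$. The only point requiring a small amount of care is confirming that critical constraints satisfy the endpoint-on-boundary hypothesis of Theorem \ref{theo:zone}, which follows directly from the definition of windows.
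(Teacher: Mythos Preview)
Your proposal is correct and follows essentially the same argument as the paper: instantiate Theorem~\ref{theo:zone} with $S$ the set of all critical constraints (after noting both endpoints lie on $\partial\calP$), identify $\Lambda$ with $|E_2(s)|$, and bound $m$ by $|C(s)|=O(k)$ via Lemma~\ref{lem:20}. The paper names the set of constraints containing an $E_2(s)$-edge $C'(s)$ and observes $C'(s)\subseteq C(s)$, which is exactly your $m\le |C(s)|$ step.
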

\begin{proof}
The set $E_2(s)$ consists of all edges of
$E(s)$ that lie on the critical constraints. Recall that each
critical constraint is a line segment in $\calP$ with both endpoints on
$\partial\calP$. Consider the arrangement formed by all critical
constraints of $\calP$ and $\partial\calP$. The complexity of the
zone $Z(s)$ of the query segment $s$ in this arrangement is exactly $|E_2(s)|$.
Let $C'(s)$ be the set of critical constraints of $\calP$ each of which contains at least one edge in $E_2(s)$. Then, by the zone theorem (Theorem \ref{theo:zone}), we have $|E_2(s)|=O(|C'(s)|)$. Note that
$C'(s)\subseteq C(s)$. Due to $|C(s)|=O(k)$ (Lemma \ref{lem:20}), we have
$|E_2(s)|=O(k)$. The corollary thus follows.
\end{proof}

Lemma \ref{lem:30} and Corollary \ref{cor:10} together lead to Lemma
\ref{lem:10}.

\subsectionspace
\subsection{Proving the Zone Theorem (i.e., Theorem \ref{theo:zone})}
\label{sec:zone}

This subsection is devoted entirely to proving the zone theorem, i.e., Theorem \ref{theo:zone}.
All notations here are the same as defined before.

We partition the set $S_s$ into two subsets: $S_s^1$ and $S_s^2$.
For each segment in $S_s$, if it does not intersect the interior of
$s$, then it is in $S_s^1$; otherwise,
it is in $S_s^2$. Let $m_1=|S_s^1|$ and $m_2=|S_s^2|$.
Hence, $m=m_1+m_2$.  Consider the arrangement formed by the line segments
in $S_s^1$ and $\partial\calP$. Since no segment in $S_s^1$ intersects the
interior of $s$,
$s$ must be contained in a single face of the above arrangement and we denote
by $F_s$ that face. For each edge of $F_s$, if it lies on a segment of
$S$, we also call it an $S$-edge.
Note that the edges of $F_s$ that are not $S$-edges are all on $\partial\calP$.

\begin{lemma}\label{lem:40}
The number of $S$-edges of the face $F_s$ is $O(m_1)$; the shortest
path in $\calP$ between any two points in $F_s$ is contained in $F_s$.
\end{lemma}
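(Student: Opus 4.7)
The plan is to prove the two claims of the lemma in the opposite order from how they are stated: I will first establish the geodesic-convexity claim, and then leverage it to derive the $O(m_1)$ bound on $S$-edges.

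For geodesic convexity, let $x,y\in F_s$ and let $\pi$ be a shortest path in $\calP$ from $x$ to $y$. For each chord $c\in S_s^1$, write $\calP=H_c^+\cup H_c^-$, where $H_c^+$ is the closed subpolygon containing $F_s$. I would show $\pi\subseteq H_c^+$ for every such $c$: suppose not, and pick a maximal open sub-interval $(t_1,t_2)$ of the domain of $\pi$ on which $\pi$ lies strictly inside $H_c^-$; then $\pi(t_1),\pi(t_2)\in c$. The straight segment from $\pi(t_1)$ to $\pi(t_2)$ lies along the chord $c\subseteq\calP$ and has length at most that of $\pi|_{[t_1,t_2]}$. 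Since sub-paths of shortest paths are themselves shortest, equality must hold and $\pi|_{[t_1,t_2]}$ coincides with this straight segment, which lies on $c$ and therefore not strictly inside $H_c^-$, a contradiction. Consequently $\pi\subseteq\bigcap_{c\in S_s^1} H_c^+$; as $\pi$ is connected and starts at a point of $\overline{F_s}$, it stays in the connected component of this intersection containing $x$, namely $\overline{F_s}$.

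For the $S$-edge bound, the goal is to show each chord $c\in S_s^1$ contributes at most one $S$-edge to $\partial F_s$. First, using geodesic convexity, $\overline{F_s}\cap c$ must be a single sub-interval $I_c$ of $c$: if it had two components separated by a gap, picking one point from each and connecting them along $c$ would yield a shortest path in $\calP$ passing through the gap and hence leaving $\overline{F_s}$, contradicting what was just proved. Next, I claim that the relative interior of $I_c$ contains no crossing with another chord. At a crossing $v$ of $c$ with some $c'\in S_s^1$, the two chords locally divide a neighborhood of $v$ into four open quadrants, and the open face $F_s$ occupies exactly one of them; therefore $\partial F_s$ bends at $v$ away from $c$ onto $c'$, so $v$ must be an endpoint (not an interior point) of $I_c$. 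Since every chord of $S_s^1$ has both endpoints on $\partial\calP$, no other arrangement vertex can lie on $c$ except these crossings and the two endpoints of $c$ itself; hence the interior of $I_c$ is free of arrangement vertices, and $I_c$ is exactly one $S$-edge. Summing over the $m_1$ chords yields $|E(F_s)|\le m_1$.

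The main obstacle will be keeping the chord-shortcut argument clean when $\pi$ grazes a chord tangentially or passes through chord intersections. I will address these degeneracies by working throughout with the closed halves $\overline{H_c^+},\overline{H_c^-}$ and with the closure $\overline{F_s}$, and by invoking the general-position assumption from Section~\ref{sec:pre} (no three collinear vertices of $\calP$, and no endpoint of $s$ collinear with two vertices of $\calP$), so that the chord-shortcut strictly shortens $\pi$ unless $\pi|_{[t_1,t_2]}$ already lies on $c$.
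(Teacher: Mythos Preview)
Your proposal is correct and follows essentially the same approach as the paper: both first establish geodesic convexity of $F_s$ via a chord-shortcut argument (the paper phrases this as $F_s=\bigcap_{s'\in S_s^1}\calP(s')$ and notes that a shortest path between two points of $\calP(s')$ cannot leave $\calP(s')$), and then use that to show each chord of $S_s^1$ contributes at most one maximal portion to $\partial F_s$. Your local-quadrant argument at crossings makes explicit what the paper leaves implicit, namely that such a maximal portion is a single arrangement edge rather than several.
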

\begin{proof}
For each segment $s'$ in $S_s^1$, since both endpoints of $s'$ are on
$\partial\calP$, $s'$ partitions $\calP$ into two simple polygons and
one of them contains $s$, which we denote by $\calP(s')$. It is easy
to see that the face $F_s$ is the common intersection of $\calP(s')$'s
for all $s'$ in $S_s^1$. To prove the lemma, it is sufficient to show that each
segment $s'$ in $S_s^1$ has at most one (maximal) continuous portion on the
boundary of $F_s$, as follows.

For any two points $p$ and $q$ in $\calP$, denote by $\pi(p,q)$ the
shortest path between $p$ and $q$ in $\calP$. Note that since $\calP$
is a simple polygon, $\pi(p,q)$ is unique. We claim that for any two
points $p$ and $q$ in the face $F_s$, $\pi(p,q)$ is contained in
$F_s$. Indeed, suppose to the contrary $\pi(p,q)$ is not
contained in $F_s$. Then, $\pi(p,q)$ must cross the boundary of $F_s$.
Since $\pi(p,q)$ cannot cross the boundary of $\calP$, $\pi(p,q)$ must
cross an $S$-edge of $F_s$, and we assume $s'$ is the segment in $S_s^1$
that contains such an $S$-edge. This implies that $\pi(p,q)$ is also not
contained in the polygon $\calP(s')$. Recall that the line segment
$s'$ partitions $\calP$ into two simple polygons and one of them is
$\calP(s')$. It is easy to show that for any two points in
$\calP(s')$, their shortest path in $\calP$ must be contained in
$\calP(s')$. Therefore, we obtain a contradiction. Hence, our above claim
is true.

Now assume to the contrary that a segment $s'$ in $S_s^1$ has two
disjoint maximal continuous portions on the boundary of $F_s$. Let $p$ and $q$ be
two points on these two portions of $s'$, respectively. Thus, both $p$
and $q$ are in $F_s$.
Since these are two discontinuous portions of $s'$ on the boundary of $F_s$, the line
segment $\overline{pq}$ is not contained in $F_s$. Since
$\overline{pq}$ is on $s'$, the
shortest path $\pi(p,q)$ is $\overline{pq}$. But this means $\pi(p,q)$
is not contained in $F_s$, which incurs a contradiction with our
previous claim that $\pi(p,q)$ must be contained in $F_s$.
Hence, we obtain that each
segment $s'$ in $S_s^1$ has at most one continuous portion on the
boundary of $F_s$, and consequently, the number of $S$-edges
of the face $F_s$ is $O(m_1)$.
\end{proof}

Lemma \ref{lem:50} below
shows a property of the face $F_s$.

\begin{lemma}\label{lem:50}
For any line segment $s'$ in $\calP$ with both endpoints on $\partial\calP$, $s'$ has
at most one (maximal) continuous portion intersecting $F_s$; consequently, $s'$ intersects
the interior of at most two edges of $F_s$.
\end{lemma}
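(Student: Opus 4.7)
The plan is to derive Lemma~\ref{lem:50} by a direct application of the shortest-path property established in Lemma~\ref{lem:40}; in fact the first half of the statement is essentially the same contradiction argument that already appears in the proof of Lemma~\ref{lem:40}, now applied to an arbitrary segment $s'$ rather than to a segment of $S_s^1$.

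For the main statement, I would argue by contradiction. Suppose $s'$ has two (or more) disjoint maximal continuous portions intersecting $F_s$. Pick any point $p$ in one such portion and any point $q$ in another. Then both $p,q\in F_s$, and since $p,q$ lie on the single line segment $s'\subseteq \calP$, the line segment $\overline{pq}$ is a subsegment of $s'$ and hence is contained in $\calP$. Because $\calP$ is simply connected and $\overline{pq}\subset\calP$, the (unique) shortest path $\pi(p,q)$ in $\calP$ coincides with $\overline{pq}$. By Lemma~\ref{lem:40}, $\pi(p,q)$ must lie entirely in $F_s$, so $\overline{pq}\subseteq F_s$. But then $\overline{pq}$ would connect the two portions of $s'\cap F_s$ through points of $F_s$ lying on $s'$, contradicting their maximality as disjoint portions. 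Hence $s'$ has at most one maximal continuous portion in $F_s$.

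For the consequence on edges of $F_s$, I would observe that the (single) maximal portion of $s'\cap F_s$ is a subsegment $\overline{p'q'}$ of $s'$. Each of its two endpoints $p'$, $q'$ is either an endpoint of $s'$ (which lies on $\partial\calP$ and hence does not have to be the interior of an edge of $F_s$) or else a point where $s'$ crosses the boundary of $F_s$, in which case it lies in the interior of some edge of $F_s$ (by the general-position assumption that $s'$ is not collinear with any vertex, in particular $s'$ does not pass through a vertex of $F_s$). Thus $s'$ meets the interior of at most two edges of $F_s$, one at each end of $\overline{p'q'}$.

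The main thing to be careful about is the general-position hypothesis: without the assumption that $s'$ is not collinear with any vertex of $\calP$, $s'$ could graze a vertex of $F_s$ and one would have to argue about whether such a touching counts as crossing an edge ``in its interior.'' I do not expect any other obstacle, since the heavy lifting — that $F_s$ is ``geodesically convex'' in $\calP$ — has already been done in Lemma~\ref{lem:40}, and the rest is a straightforward contradiction argument.
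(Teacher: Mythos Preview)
Your proposal is correct and follows essentially the same approach as the paper: both argue by contradiction, picking points $p$ and $q$ in two supposedly disjoint maximal portions of $s'\cap F_s$, observing that $\overline{pq}\subset s'$ is the shortest path in $\calP$, and invoking the geodesic-convexity clause of Lemma~\ref{lem:40} to obtain the contradiction. The paper leaves the ``consequently'' clause implicit, while you spell it out; your added remark about general position is not strictly needed (even if $s'$ passed through a vertex of $F_s$, that would not count as meeting the \emph{interior} of an edge), but it does no harm.
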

\begin{proof}
Assume to the contrary that $s'$ has two disjoint maximal
continuous portions intersecting $F_s$. Let $p$ and $q$ be
two points on these two portions of $s'$, respectively. Thus, both $p$
and $q$ are in $F_s$. Clearly, the line
segment $\overline{pq}$ is not contained in $F_s$. Since
$\overline{pq}$ is on $s'$, $\overline{pq}$ is the
shortest path $\pi(p,q)$ between $p$ and $q$ in $\calP$. But this means $\pi(p,q)$
is not contained in $F_s$, which incurs a contradiction with
Lemma \ref{lem:40}. Hence, the lemma holds.
\end{proof}

For each $S$-edge of $Z(s)$, it lies either on a segment in $S_s^1$ or
on a segment in $S_s^2$; we call it an $S_s^1$-edge if it lies on a
segment in $S_s^1$ and an $S_s^2$-edge otherwise.
Due to $m=m_1+m_2$, our zone theorem is an immediate consequence of
Lemma \ref{lem:60} below. Note that we can obtain the zone $Z(s)$
of $s$ by adding the segments of $S_s^2$ to $F_s$.
To prove Lemma \ref{lem:60}, we use induction on $m_2$, i.e., $|S_s^2|$.
The approach is very similar to that in \cite{ref:deBergCo08}
used for line arrangements. Here, although we have line segments instead of lines, the
property that each line segment has both endpoints on $\partial\calP$
makes the approach in \cite{ref:deBergCo08} applicable with some modifications.

\begin{lemma}\label{lem:60}
There are $O(m_2)$ $S_s^2$-edges
and $O(m_1+m_2)$ $S_s^1$-edges in the zone $Z(s)$.
\end{lemma}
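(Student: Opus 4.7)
The plan is to prove Lemma \ref{lem:60} by induction on $m_2 = |S_s^2|$, adapting the classical zone-theorem proof for line arrangements in \cite{ref:deBergCo08}. For the base case $m_2 = 0$ the zone $Z(s)$ coincides with the single face $F_s$, so Lemma \ref{lem:40} directly gives the $O(m_1)$ $S_s^1$-edge bound, while the $S_s^2$-edge count is trivially $0$.

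For the inductive step I orient $s$ from one endpoint to the other and pick $s^* \in S_s^2$ so that the intersection $p = s^* \cap s$ is rightmost among all $S_s^2$-crossings on $s$. Removing $s^*$ leaves $F_s$ unchanged as the enclosing face, and the induction hypothesis gives $O(m_2-1)$ $S_s^2$-edges and $O(m_1+m_2-1)$ $S_s^1$-edges in the smaller zone. It then suffices to show that reinserting $s^*$ creates only $O(1)$ new edges of each type, so the inductive bounds are preserved.

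The key structural inputs are: by Lemma \ref{lem:50}, $s^*$ meets $F_s$ in a single chord $c^*$ with both endpoints on $\partial F_s$, and any two chords in $S_s^2$ cross at most once inside $F_s$, so the chords behave like pseudo-lines in $F_s$. Consequently the two endpoints of $c^*$ subdivide at most two pre-existing $\partial F_s$-edges, accounting for only $O(1)$ new $S_s^1$-edges. Moreover, $c^*$ splits $F_s$ into two sides $F_s^+$ and $F_s^-$; by the extremal choice of $p$, the portion of $s$ past $p$ lies entirely in one of them, say $F_s^-$, inside a single face $f_R$, which is then the \emph{only} face of $Z(s)$ contained in $F_s^-$. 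This gives an easy $O(1)$ bound on the new $S_s^2$-edges on the $F_s^-$ side.

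The \emph{main obstacle} is the $F_s^+$ side, on which $s$ may be crossed by several other chords, so multiple zone faces can coexist there. I plan to mimic the de Berg et al.\ argument by classifying edges into two types (for example ``left-bounding'' vs.\ ``right-bounding'' with respect to a fixed orientation of $s$) and tracking, via a case analysis, how $c^*$ interacts with pre-existing chord sub-edges inside $F_s^+$. The extremality of $s^*$ together with the pseudo-line property supplied by Lemma \ref{lem:50} should force the number of new edges of each type on the $F_s^+$ side also to be $O(1)$. This delicate bookkeeping, which mirrors the classical line-arrangement proof but uses chord endpoints on $\partial F_s$ in place of points at infinity, is the technical heart of the argument and completes the induction.
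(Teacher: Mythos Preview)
Your proposal is correct and follows essentially the same route as the paper: both induct on $m_2$, remove the segment of $S_s^2$ whose crossing with $s$ is rightmost, invoke Lemma~\ref{lem:50} to bound the increase in $S_s^1$-edges by a constant, and adapt the left-/right-bounding edge count from the classical line-arrangement zone proof in~\cite{ref:deBergCo08} to bound the increase in $S_s^2$-edges. The only cosmetic difference is that the paper skips your intermediate $F_s^+/F_s^-$ split and works directly with left-bounding edges, identifying the first intersections $v,w$ of $s'$ with other $S_s^2$-segments above and below $s$ and using the cut-off regions $R(v),R(w)$ (bounded by pieces of two chords and $\partial\calP$) to argue that at most three (or five, in the degenerate case) new left-bounding $S_s^2$-edges appear---exactly the ``delicate bookkeeping'' you defer.
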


\begin{proof}
Without loss of generality, assume the segment $s$ is horizontal.
 It is easy to see
that each $S_s^1$-edge bounds one face of $Z(s)$ and each $S_s^2$-edge
bounds two faces of $Z(s)$ (one lies on its right and the other lies
on its left). For each $S_s^2$-edge, we say it is a {\em left bounding
$S_s^2$-edge} for the face
lying on the right of it and a {\em right bounding $S_s^2$-edge} for the face
lying on the left of it.
Below we will prove that the number of left
bounding $S_s^2$-edges of the faces in $Z(s)$ is $O(m_2)$.
Analogously, the number of right bounding $S_s^2$-edges of the faces
in $Z(s)$ is also $O(m_2)$. In addition, we will also show that the
number of $S_s^1$-edges of $Z(s)$ is $O(m_1+m_2)=O(m)$.

Our proof is by induction on $m_2$. Consider the base case with $m_2=1$.
Denote by $s'$ the only segment in $S_s^2$.
Note that the face $F_s$ has no $S_s^2$-edges on its boundary and has $O(m_1)$
$S_s^1$-edges by Lemma \ref{lem:40}.
In light of Lemma \ref{lem:50}, $s'$ has at most one maximal continuous
portion intersecting the face $F_s$ and $s'$ intersects the interior of at most two
$S_s^1$-edges of $F_s$. Therefore, after we add $s'$ to $F_s$,
the number of $S_s^1$-edges of $Z(s)$ increases by at most two and the number of
left bounding $S_s^2$-edges increases by at most one.

\begin{figure}[h]
\begin{minipage}[t]{\linewidth}
\begin{center}
\includegraphics[totalheight=1.2in]{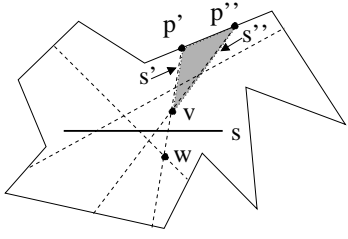}
\caption{\footnotesize The shaded region is $R(v)$, which is not in
the zone of $s$.}
\label{fig:arrangement}
\end{center}
\end{minipage}
\vspace*{-0.15in}
\end{figure}

Consider the general case of $m_2\geq 1$. Let $s'$ be the segment in $S_s^2$
that intersects $s$ at the rightmost position among all segments in $S_s^2$.
We first consider the case when this segment
$s'$ is unique. By induction, the zone of $s$ has $c\cdot (m_2-1)$ left
bounding $S_2^2$-edges and $c\cdot (m_1+m_2-1)$ $S_s^1$-edges, for
some constant $c$, without considering the segment $s'$.
Now consider adding $s'$.
First, by Lemma \ref{lem:50}, $s'$ has at most one maximal continuous
portion intersecting the face $F_s$ and $s'$ intersects the interior of at most two
$S_s^1$-edges of the zone $Z(s)$; therefore, the number of $S_s^1$-edges
increases by at most $2$. Second, the number of left bounding $S_s^2$-edges
increases in two ways: there are new left bounding $S_s^2$-edges on $s'$
and there are old left bounding $S_s^2$-edges that are split by $s'$. Let $v$
be the first intersection point of $s'$ with another segment in
$S_s^2$ above $s$, and let $w$ be the first intersection point of $s'$ with
another segment in $S_s^2$ below $s$ (e.g., see
Fig.~\ref{fig:arrangement}). We assume both $v$ and $w$ exist since otherwise the analysis is even simpler.
The segment $\overline{vw}$ on
$s'$ becomes a new left bounding $S_s^2$-edge. In addition, $s'$
splits a left bounding $S_s^2$-edge at $v$ and at $w$,
respectively. Hence,
the number of the left bounding $S_s^2$-edges increases by three. We claim
that there is no other increase for the number of left bounding $S_s^2$-edges.

Indeed, consider the part of $s'$ above $v$. Let $s''$ be the segment in $S_s^2$
that intersects $s'$ at $v$. Let $p'$ be the endpoint of $s'$ above $v$
and $p''$ be the endpoint of $s''$ above $v$. Note that both $p'$
and $p''$ are on $\partial\calP$. Consider the region $R(v)$ above $v$ enclosed by $\overline{vp'}$, $\overline{vp''}$, and the portion of $\partial\calP$ between $p'$ and
$p''$ (e.g., see Fig.~\ref{fig:arrangement}). Clearly, the region $R(v)$ is not in the zone of $s$. Further, $R(v)$ lies on the right of $\overline{vp'}$, and thus
$\overline{vp'}$ cannot contribute any left bounding $S_s^2$-edges to
$Z(s)$. In addition, if a left bounding $S_s^2$-edge $e$ that was in the zone
$Z(s)$ (before $s'$ is added) is intersected by $s'$ somewhere above $v$, then the part of $e$
to the right of $s'$ (i.e., the part of $e$ in the region $R(v)$) is
not in the zone $Z(s)$ any more after $s'$ is added. Hence, there is no increase in the number of
left bounding $S_s^2$-edges due to such an intersection.

In a similar way, we can show that the part of $s$ below $w$ does not
increase the number of left bounding $S_s^2$-edges in the zone $Z(s)$. Therefore,
after $s'$ is added, the total increase of the number of left bounding
$S_s^2$-edges is at most three.

We discuss above the case when $s'$ is the only segment in $S_s^2$ through the
rightmost intersection on $s$. If there is more than one such segment,
then we take an arbitrary such segment as $s'$. By a similar analysis as that above and that
in \cite{ref:deBergCo08}, we can show that the
total increase in the number of left bounding $S_s^2$-edges is at most five.
We omit the details.

We conclude that there are $O(m_2)$ $S_s^2$-edges and
$O(m_1+m_2)$ $S_s^1$-edges in the zone $Z(s)$.  The lemma thus follows.
\end{proof}

\section{Conclusions}
\label{sec:conclusions}

In this paper, we propose two new data structures for the weak
visibility query problem on a simple polygon, which improve upon the
previous work \cite{ref:AronovVi02,ref:BoseEf02,ref:BygiWe11}.
Some results (e.g., the ray-rotating data structure and the zone
theorem) may be of independent interest.

For the $O(k\log n)$ time queries, our first data structure is
clearly optimal. For the $O(k+\log n)$ time query, however,
an open question is whether a data structure of sub-cubic preprocessing time
and space is possible.

\end{document}